\newtheorem{theorem}{Theorem}%[section]
\newtheorem{proposition}[theorem]{Proposition}
\newtheorem{lemma}[theorem]{Lemma}
\newtheorem{corollary}[theorem]{Corollary}
\theoremstyle{definition}
\newtheorem{definition}[theorem]{Definition}
\newcommand{\sub}{\subseteq}
\newcommand{\la}{\langle}
\newcommand{\ra}{\rangle}
\newcommand{\A}{\mathcal{A}}
\newcommand{\Z}{\mathbb{Z}}
\newcommand{\F}{\mathcal{F}}
\newcommand{\s}{\mathcal{S}}
\newcommand{\LL}{\mathcal{L}}
\newcommand{\SL}{\mathrm{SL}(2,\Z)}
\newcommand{\GL}{\mathrm{GL}(2,\Z)}
\renewcommand{\phi}{\varphi}
\renewcommand{\epsilon}{\varepsilon}
\begin{document}
%\layout

\title{\Large Decidability of the Membership Problem for $2\times 2$ integer matrices\thanks{This research was supported by EPSRC grant EP/M00077X/1.}}

\author{\large Igor Potapov\thanks{Department of Computer Science, University of Liverpool. Email: {\tt potapov@liverpool.ac.uk}}
\and \large Pavel Semukhin\thanks{Department of Computer Science, University of Liverpool. Email: {\tt semukhin@liverpool.ac.uk}}}
\date{}

\maketitle

%\vspace{-1cm}
\begin{abstract}
The main result of this paper is the decidability of the membership problem for $2\times 2$ nonsingular integer matrices. Namely, we will construct the first algorithm that for any nonsingular $2\times 2$ integer matrices $M_1,\dots,M_n$ and $M$ decides whether $M$ belongs to the semigroup generated by $\{M_1,\dots,M_n\}$.

Our algorithm relies on a translation of the numerical problem on matrices into combinatorial problems on words. It also makes use of some algebraical properties of well-known subgroups of $\GL$ and various new techniques and constructions that help to limit an infinite number of possibilities by reducing them to the membership problem for regular languages.
\end{abstract}

\section{Introduction}
Matrices and matrix products play a crucial role in a representation and analysis of various computational processes, 
i.e., linear recurrent sequences \cite{tHaHaHiKa05a,OW_ICALP2015-1,OW_ICALP2015-2}, arithmetic circuits \cite{GOW_STACS2015}, 
hybrid and dynamical systems \cite{OSW2015, BP2008},
probabilistic and quantum automata \cite{Blondel2005}, stochastic
games, broadcast protocols \cite{EFM1999},  optical systems \cite{GB94}, etc. 
Unfortunately, many simply formulated and elementary problems for matrices are inherently difficult to solve
even in dimension two, and most of these problems 
become undecidable in general starting from dimension three or four.
One of such hard questions is the {\sl Membership problem} in matrix semigroups:\\

\noindent
{\bf Membership problem:} Given a finite set of $m \times m$ matrices $F=\{M_1,M_2, \ldots ,M_n\}$  and a matrix $M$.
Determine if there exist an integer $k \ge 1$ and $i_1,i_2,\ldots ,i_k \in \{1,\ldots ,n\}$ such that 
$M_{i_1} \cdot M_{i_2}  \cdots  M_{i_k}=M$.
In other words, determine whether a matrix $M$ belongs to the semigroup generated by $F$.\\

In this paper we solve an open problem by showing that  
the membership is decidable
for the semigroups of $2 \times 2$ nonsingular matrices over integers.
%, closing the gap between trivially  decidable one-dimensional case and undecidable case in dimension three.
%
The membership problem was intensively studied since 1947 when A.Markov showed that 
this problem is undecidable for matrices in $\Z^{6 \times 6}$ even for a specific fixed set $F$ \cite{Markov}. 
Later, M. Paterson in 1970 showed that a special case of the membership problem when $M$ is equal to a zero matrix (known as {\sl Mortality problem})
is undecidable for matrices in $\Z^{3 \times 3}$. The decidability status of another special case of 
the membership problem --- the {\sl Identity problem} (i.e., when $M=I$, the identity matrix) --- was unknown for a long time and
was only recently shown to be undecidable for integer matrices starting from dimension four \cite{Identity}, see also the solution to Problem 10.3 in \cite{solution10-3}. The undecidability of the
identity problem means that the {\sl Group problem} (of whether a matrix semigroup over integers forms a group) is undecidable 
starting from dimension four. A more recent survey of undecidable problems can be found in~\cite{CassaigneHHN14}.

The undecidability proofs in matrix semigroups are mainly based on  various techniques and methods for embedding
universal computations into matrix products. The case of dimension two is the most intriguing
since there is some evidence that if these problems are undecidable, then this cannot be proved using 
any previously known constructions. In particular, there is no injective semigroup morphism from pairs of words over any finite
alphabet (with at least two elements) into complex $2 \times 2$ matrices \cite{CHK99},
which means that the coding of independent pairs of words in $2 \times 2$ complex matrices is impossible and the exact encoding of
the Post Correspondence Problem or a computation of the Turing Machine cannot be used directly for proving undecidability
in $2 \times 2$ matrix semigroups over $\mathbb{Z}$, $\mathbb{Q}$ or $\mathbb{C}$.
The only undecidability in the case of $2 \times 2$ matrices has been shown so far is the membership, 
freeness and vector reachability problems over quaternions \cite{BP_IC2008} or more precisely in the case of diagonal matrices over quaternions, 
which are simply double quaternions.
 
%In dimension one (i.e. when n=1), the problem is trivially decidable for integers, it is decidable in polynomial time 
%for rational and Gaussian rational numbers as it can be reduced to the solution of the membership problem for commutative matrix semigroups \cite{} and the problem is open starting from quaternions.

The problems for semigroups are rather hard, but there was a steady progress on decidable fragments over the last few decades.
First, both membership and vector reachability problems were shown to be decidable in polynomial time
for a semigroup generated by a single  $m \times m$ matrix (known as the {\sl Orbit problem}) by Kannan and Lipton \cite{KL86} in 1986.
Later, in 1996 this decidability result was extended to a more general case of  
commutative matrices \cite{Babai}.
The generalization of this result for a special class of non-commutative matrices
(a class of row-monomial matrices over a commutative semigroup satisfying some natural effectiveness conditions)
was shown in 2004 in~\cite{LP2004}.
Even now we still have long standing open problems for matrix semigroups generated by a single matrix,
see, for example, the {\sl Skolem Problem} about reaching zero in a linear recurrence sequence (LRS), which in matrix 
form is a question of whether any power of a given integer matrix $A$ has zero in the right upper corner \cite{STOC2013,COW_JACM}. It was recently shown
that the decidability of either Positivity or Ultimate Positivity
for integer LRS of order 6 would entail some major breakthroughs
in analytic number theory. The decidability of each of
these problems, whether for integer, rational, or algebraic
linear recurrence sequences, is open, although partial results
are known \cite{GOW_STACS2015,OSW2015,OW_ICALP2015-1,OW_ICALP2015-2}.

Due to a severe lack of methods and techniques the status of decision problems for $2 \times 2$ matrices (like membership, vector reachability, freeness) is remaining to be a long standing open problem. More recently, a new approach of translating numerical problems of $2 \times 2$ integer matrices into variety of combinatorial and computational problems on words over group alphabet and studying their transformations as specific rewriting systems have  led to a few results on decidability and complexity for some subclasses. 
In particular, this approach was successfully applied to proving the decidability of the membership problem for semigroups from $\GL$ \cite{CK2005} in 2005,  designing the polynomial time algorithm for the membership problem for the modular group \cite{Gurevich2007} in 2007,  showing NP-hardness for most of the reachability problems in dimension two \cite{BP2012,BHP2012} in 2012,
and showing decidability of the vector/scalar reachability problems in $\SL$ \cite{PS15} in 2015.

The main ingredient of the translation into combinatorial problems on words is the well-known result that the groups $\SL$ and $\GL$ are finitely generated. For example, $\SL$ can be generated by a pair of matrices: 
\begin{center}
$S = \begin{bmatrix}
0 & -1 \\ 1 & 0
\end{bmatrix}
$ and
$R = \begin{bmatrix}
0 & -1 \\ 1 & 1
\end{bmatrix}
$
with the following relations: $S^4=I$, $R^6=I$ and $S^2=R^3$.
\end{center}
Hence we can represent a matrix $M\in \SL$ as a word in the alphabet $\{S,R\}$.
% Also there are other ways to represent $SL(2,\mathbb{Z})$ in terms of different bases. 
% Thus, we can directly translate the matrices into words over a group alphabet and the properties of matrix products into rewriting on words. 

In \cite{CK2005} both the {\sl Identity} and the {\sl Group} problems are shown to be decidable in  $\Z^{2 \times 2}$. 
Moreover, it was also claimed more generally
that it is decidable whether or not a given nonsingular matrix belongs to a given finitely generated semigroup over integers.
Unfortunately, it appears that the proof of this more general claim (i.e., when we consider matrices with determinants different from~$\pm 1$) has a significant gap,
and it only works for a small number of special cases. Namely, after translating the membership from $\GL$ to $\SL$, the authors describe a very short reduction 
from the membership problems in $\Z^{2 \times 2}$ to the one in $\SL$ using some incorrect assumptions.
For instance, it was assumed that if $X$ is an integer matrix with determinant one and $Z$ is a nonsingular integer matrix, then 
there exists an integer matrix $Y$ satisfying the following equation  $ZX=YZ$. However, this is not true and here is a simple counter example.
Let $Z = \begin{bmatrix}
1 & 0 \\ 0 & 2
\end{bmatrix}
$ and $X = \begin{bmatrix}
0 & -1 \\ 1 & 0
\end{bmatrix}
$, then from $ZX=YZ$ it follows that 
$Y = Z X Z^{-1} =$
$
\begin{bmatrix}
1 & 0 \\ 0 & 2
\end{bmatrix}
 \times
\begin{bmatrix}
0 & -1 \\ 1 & 0
\end{bmatrix}
 \times
\begin{bmatrix}
1 & 0 \\ 0 & \frac{1}{2}
\end{bmatrix}
=
\begin{bmatrix}
0 & -\frac{1}{2} \\ 2 & 0
\end{bmatrix}
$. So $Y$ has fractional coefficients, and if the matrices $X$ and $Z$ were in the generating set, then the argument from \cite{CK2005} would not work.

The main result of this paper is that the membership problem is decidable
for the semigroups of $2 \times 2$ nonsingular integer matrices. Our proof provides an algorithm for solving this problem, which is
based on the translation of the numerical problem on matrices into combinatorial problems on words and regular languages. We will also makes use of some well-known algebraical results like the uniqueness of the Smith normal form of a matrix and a fact that certain subgroups of $\GL$ have finite index.

\section{Preliminaries}

The semigroup of $2\times 2$ integer matrices is denoted by $\Z^{2\times 2}$.
We use $\SL$ to denote the special linear group of $2\times 2$ matrices with integer coefficients, i.e., $\SL=\{M\in \Z^{2\times 2} : \det(M)=1\}$ and $\GL$ to denote the general linear group, i.e., $\GL=\{M\in \Z^{2\times 2} : \det(M)=\pm 1\}$.

A matrix is called \emph{nonsingular} if its determinant is not equal to zero.

If $F$ is a finite collection of matrices from $\Z^{2\times 2}$, then $\la F\ra$ denotes the semigroup generated by $F$ (including the identity matrix), that is, $M\in \la F\ra$ if and only if $M=I$ or there are matrices $M_1,\dots,M_n\in F$ such that $M=M_1\cdots M_n$.

\section{Main result}

The main result of our paper is presented in Theorem \ref{thm:mem} which states that membership problem in dimension two is decidable.

\begin{theorem} \label{thm:mem}
There is an algorithm that decides for a given finite collection $F$ of nonsingular matrices from $\Z^{2\times 2}$ and a matrix $M\in \Z^{2\times 2}$ whether $M\in \la F\ra$.
\end{theorem}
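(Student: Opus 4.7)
My plan is to build on the decidability of membership in $\GL$ from \cite{CK2005} and carry out a reduction from $\Z^{2\times 2}$ to $\GL$ carefully, avoiding the gap highlighted in the introduction. The first step is to analyse the determinant. Since $\det$ is multiplicative, a product $M_{i_1}\cdots M_{i_k}=M$ forces $\prod_j \det(M_{i_j})=\det(M)$, which restricts the Parikh vector of the word $i_1\cdots i_k$ to an effectively computable semilinear set in $\N^n$. Second, I would put each generator into Smith normal form $M_i=U_iD_iV_i$ with $U_i,V_i\in\GL$ and $D_i$ diagonal, so that a product reorganises as $U_{i_1}D_{i_1}W_1D_{i_2}W_2\cdots D_{i_k}V_{i_k}$ with $W_j=V_{i_j}U_{i_{j+1}}\in\GL$. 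The aim is to isolate the $\GL$ contributions from the diagonal contributions so that they can be handled separately.

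The main obstacle is the point at which \cite{CK2005} failed: a $\GL$ matrix does not in general commute past a diagonal matrix while remaining integer. To resolve this, let $N$ be a common multiple of the invariant factors of all generators and consider the congruence-type subgroup $H=\{X\in\SL:X_{12}\equiv X_{21}\equiv 0\pmod N\}$. A direct computation shows $D_iHD_i^{-1}\subseteq\Z^{2\times 2}$ for every $D_i$ appearing in a generator, and $H$ has finite index in $\SL$ by a standard congruence argument. I would decompose each $W_j$ as $W_j=C_j h_j$ with $C_j$ drawn from a fixed finite transversal of $H$ in $\GL$ and $h_j\in H$, inserting this factorisation into the product. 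The $h_j$ pieces can then be pushed through the remaining $D_{i_\ell}$'s without ever leaving the integers, so the product becomes a diagonal part, computable from the Parikh vector and the fixed sequence of $D_{i_j}$'s, times an $\SL$ word whose letters come from a finite alphabet depending only on the generators and the sequence of cosets $(C_j)$.

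Under this decomposition, the set of generator-sequences producing $M$ becomes the intersection of (i) a Presburger constraint encoding the correct determinant and the coincidence of the diagonal parts, (ii) a finite-state constraint tracking the sequence of cosets $C_j$ compatible with the fixed transversal, and (iii) a membership question in $\SL$ for the accumulated unimodular part, which is decidable by \cite{CK2005}. Each ingredient is effectively testable, so non-emptiness of the combined language yields the algorithm. The step I expect to be the main obstacle is proving that the propagation of the $h_j$'s through subsequent diagonals stays inside $\Z^{2\times 2}$ for every admissible coset schedule, and that the resulting constraints on the Smith factors line up consistently with the coset bookkeeping; pinning down the precise congruence subgroup $H$ and its transversal, together with the regularity of the induced combinatorial constraints, is the crux of the argument.
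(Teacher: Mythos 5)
Your proposal shares the paper's key intuitions — Smith normal forms, finite-index congruence subgroups that commute usefully with diagonal matrices, and tracking a coset schedule — so the high-level plan is aimed in the right direction. However, there are concrete gaps that mean the argument does not go through as written.

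First, the two-sided congruence subgroup $H=\{X\in\SL : X_{12}\equiv X_{21}\equiv 0\pmod N\}$ is \emph{not} preserved under conjugation by $D=\operatorname{diag}(m,mn)$. You do get $D^{-1}HD\subseteq\Z^{2\times 2}$, but the conjugate of $h\in H$ has $(2,1)$-entry $h_{21}/n$, which need not be divisible by $N$, so $D^{-1}HD\not\subseteq H$. Consequently, ``push the $h_j$ pieces through the remaining $D_{i_\ell}$'s'' fails after a single step: the pushed element leaves $H$, and pushing through the next diagonal may leave $\Z^{2\times 2}$ altogether, since the divisibility condition after two diagonals involves the product of the two indices $n$, which can exceed $N$. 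This is the very obstacle you flag at the end of your write-up; the paper resolves it by working with the \emph{one-sided} subgroup $H(n)=\{A : n\mid a_{21}\}$, by the fact that $A^D$ for $A\in H(n)$ lands in $F(n)$ rather than back in $H(n)$ (Proposition~\ref{prop:conj}), and by re-decomposing against a fresh transversal after every single diagonal (Lemma~\ref{lem:fin} and the induction of Proposition~\ref{prop:autom}). Your plan needs this re-decomposition at every stage; it cannot be a one-shot decomposition of the $W_j$'s.

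Second, your step (iii) — ``a membership question in $\SL$ for the accumulated unimodular part, decidable by~\cite{CK2005}'' — is not a plain membership question. Once the coset schedule is fixed, what remains is to decide whether a target unimodular matrix can be written as $\phi(u)^{-1}$ for a word $u$ in a regular language that was built from $\F_D$-type conjugations of the original generator automata. That is a question about the $\phi$-image of a regular language hitting a specific matrix, which is precisely what the paper's canonical-form automaton $\mathrm{Can}(\A)$ and Proposition~\ref{MPReg} are for; it is the bulk of the technical work (Sections~\ref{GL}--\ref{Gen}) and cannot be delegated to a black-box membership oracle for finitely generated semigroups in $\GL$. Relatedly, the paper leans crucially on the simple but essential bound $t\le\log_2|\det(M)|$ on the number of non-unimodular factors to reduce to finitely many equations of the form $A_1M_{i_1}\cdots A_tM_{i_t}A_{t+1}=M$ with each $A_j$ ranging over the regular set $\s^{\pm 1}$. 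Your Presburger constraint on the Parikh vector encodes this bound implicitly, but because you never enumerate the finitely many ``diagonal skeletons'', the length of your coset schedule remains unbounded and the combination of constraints (i)--(iii) is never shown to be decidable.

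In short: the choice of congruence subgroup and the interplay between conjugation and coset decomposition need to be redone as in the paper (one-sided $H(n)$, re-transversal after each diagonal, inductive automaton construction), and the final step needs the regular-language-over-$\GL$ machinery rather than an appeal to semigroup membership in $\GL$.
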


{\it Proof sketch.} Let $\{M_1,\dots,M_n\}$ be all matrices from $F$ whose determinant is different from~$\pm 1$, and let $\s^{\pm 1}$ be the semigroup which is generated by all matrices from $F$ with determinant $\pm 1$, that is, $\s^{\pm 1} = \la\, F\cap \GL\,\ra$. Then it is not hard to see that $M\in \la F\ra$ if and only if $M\in \s^{\pm 1}$ or there is a sequence of indices $i_1,\dots,i_t\in \{1,\dots,n\}$ and matrices $A_1,\dots,A_{t+1}$ from $\s^{\pm 1}$ such that
\[
M=A_1M_{i_1}A_2M_{i_2}\cdots A_tM_{i_t}A_{t+1}.
\]

The key point of the proof is that the value of $t$ is bounded. Indeed, since $|\det(M_{i_s})|\geq 2$, for $s=1,\dots,t$, we have that $t\leq \log_2|\det(M)|$. So to decide whether or not $M\in \la F\ra$ we first need to check whether $M\in \s^{\pm 1}$. If $M\notin \s^{\pm 1}$, then we need to go through all sequences $i_1,\dots,i_t\in \{1,\dots,n\}$ of length up to $\log_2|\det(M)|$ and for every such sequence check whether there are matrices $A_1,\dots,A_{t+1}$ from $\s^{\pm 1}$ such that $M=A_1M_{i_1}A_2M_{i_2}\cdots A_tM_{i_t}A_{t+1}$. The rest of the paper is devoted to the proof that these problems are algorithmically decidable.

In Section \ref{GL} we describe an algorithm that decides whether $M\in \s^{\pm 1}$. In fact, in Proposition~\ref{MPReg} we prove a stronger statement that it is decidable whether $M\in \s$, where $\s$ is an arbitrary regular subset of $\GL$, that is, a subset which is defined by a finite automaton. The precise definition of this notion is given in Section \ref{GL}. We will also show there that any semigroup in $\GL$, and in particular $\s^{\pm 1}$, is a regular subset.

Proposition \ref{MPReg} provides an alternative proof for the decidability of the membership in $\GL$ presented in \cite{CK2005}. The difference of our approach is that we do not introduce new symbols in the alphabet, and we explicitly construct an automaton $\mathrm{Can}(\A)$ that accepts only canonical words. The construction of $\mathrm{Can}(\A)$ will be also used in the next steps of our algorithm.

In Section \ref{1mat} we provide a proof for the decidability of the second problem in the special case when $t=1$. Again, in Corollary \ref{cor:1diag} we prove a more general statement that for any two nonsingular matrices $M_1$ and $M_2$ from $\Z^{2\times 2}$ and regular subsets $\s_1$ and $\s_2$, it is decidable whether there are matrices $A_1\in \s_1$ and $A_2\in \s_2$ such that $A_1M_1A_2=M_2$.

Finally, in Section \ref{Gen} we describe an algorithm for the general case. Namely, in Theorem \ref{thm:diag} we will prove that for any nonsingular matrices $M_1,\dots,M_t$ from $\Z^{2\times 2}$ and for any regular subsets $\s_1,\dots,\s_t$ of~$\GL$, it is decidable whether there are matrices $A_1\in \s_1,\dots,A_t\in \s_t$ such that $A_1M_1\cdots A_{t-1}M_{t-1}A_t=M_t$.

\qed

{\it Remark.} The complexity of our algorithm is in EXPSPACE. The exponential blow-up in memory usage happens when we translate matrices into words and construct a finite automaton for the semigroup $\s^{\pm 1}$ (see the paragraph before Corollary \ref{cor:GL} in Section \ref{GL}). The other steps of the algorithm require only polynomial space.
Furthermore, our algorithm can be extended to check the membership not only for semigroups in $\Z^{2\times 2}$ but for arbitrary regular subsets of nonsingular matrices from $\Z^{2\times 2}$.

\subsection{Decidability of the membership problem in $\GL$.}\label{GL}

We will use an encoding of matrices from $\GL$ by words in alphabet $\Sigma=\{X,N,S,R\}$. For this we define a mapping $\phi: \Sigma\to \GL$ as follows:
\[
\phi(X)=\begin{bmatrix} -1 & 0\\ 0 & -1\end{bmatrix}\!,\
\phi(N)=\begin{bmatrix} 1 & 0\\ 0 & -1\end{bmatrix}\!,\
\phi(S)=\begin{bmatrix} 0 & -1\\ 1 & 0\end{bmatrix}\!,\
\phi(R)=\begin{bmatrix} 0 & -1\\ 1 & 1\end{bmatrix}\!.
\]
We can extend $\phi$ to the morphism $\phi: \Sigma^* \to \GL$ in a natural way. It is a well-known fact that morphism $\phi$ is surjective, that is, for every $M\in \GL$ there is a word $w\in \Sigma^*$ such that $\phi(w)=M$.

\begin{definition}
We call two words $w_1$ and $w_2$ from $\Sigma^*$ \emph{equivalent}, denoted $w_1\sim w_2$, if $\phi(w_1)=\phi(w_2)$.

Two languages $L_1$ and $L_2$ in the alphabet $\Sigma$ are \emph{equivalent}, denoted $L_1\sim L_2$, if
\begin{enumerate}[(i)]
\item for each $w_1\in L_1$, there exists $w_2\in L_2$ such that $w_1\sim w_2$, and
\item for each $w_2\in L_2$, there exists $w_1\in L_1$ such that $w_2\sim w_1$.
\end{enumerate}
In other words, $L_1\sim L_2$ if and only if $\phi(L_1)=\phi(L_2)$.
Two finite automata $\A_1$ and $\A_2$ with alphabet $\Sigma$ are \emph{equivalent}, denoted $\A_1\sim \A_2$, if $L(\A_1)\sim L(\A_2)$.
\end{definition}

To simplify the notation we will often write $M=w$ instead of $M=\phi(w)$ when $M\in \GL$ and $w\in \Sigma^*$. Note that in this notation if $M=w_1$ and $M=w_2$, then we have $w_1\sim w_2$ but not necessarily $w_1=w_2$.

\begin{definition}
A subset $\s\sub \GL$ is called \emph{regular} or \emph{automatic} if there is a regular language $L$ in alphabet $\Sigma$ such that $\s=\phi(L)$.
\end{definition}

Throughout the paper we will use the following abbreviation: if $n$ is a positive integer and $V\in \Sigma$, then $V^n$ denotes a words of length $n$ which contains only letter $V$, and $V^0$ is assumed to be equal to the empty word.

\begin{definition}
A word $w\in \Sigma^*$ is called a \emph{canonical word} if it has the form
\[
w=N^\delta X^\gamma S^{\beta}R^{\alpha_0}SR^{\alpha_1}SR^{\alpha_2}\dots SR^{\alpha_{n-1}}SR^{\alpha_n},
\]
where $\beta,\delta,\gamma\in \{0,1\}$, $\alpha_0,\dots,\alpha_{n-1}\in \{1,2\}$, and $\alpha_n\in \{0,1,2\}$. In other words, $w$ is \emph{canonical} if it does not contain subwords $SS$ or $RRR$. Moreover, letter $N$ may appear only once in the first position, and letter $X$ may appear only once either in the first position or after $N$.
\end{definition}

We will make use of Corollary \ref{cor:can} below which states that every matrix from $\GL$ can be represented by a unique canonical word.

\begin{proposition}[\cite{LS,MKS,Ran}]\label{prop:can}
For every matrix $M\in \SL$, there is a unique canonical word $w$ such that $M=w$. Note that $w$ does not contain letter $N$ because $\phi(N)\notin \SL$.
\end{proposition}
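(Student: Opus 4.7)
The plan is to derive Proposition \ref{prop:can} from the classical normal-form theorem for the modular group. Two facts from \cite{LS,MKS,Ran} are crucial. First, $\SL$ admits the finite presentation $\la S,R \mid S^4=R^6=1,\ S^2=R^3\ra$, so the central element $X=S^2=R^3=-I$ has order two in $\SL$. Second, the quotient $\SL/\{\pm I\}$ is isomorphic to the free product $\Z/2\Z * \Z/3\Z$, with the nontrivial element of the $\Z/2\Z$-factor represented by the image of $S$ and the two nontrivial elements of the $\Z/3\Z$-factor represented by the images of $R$ and $R^2$.

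Step 1 (Existence). Since $S^{-1}=S^3$ and $R^{-1}=R^5$, the set $\{S,R\}$ generates $\SL$ as a monoid, so $M=\phi(u)$ for some $u\in \{S,R\}^*$; this is the source of the observation that the canonical word for $M\in \SL$ never contains $N$. I would then iteratively apply the $\sim$-preserving rewriting rules $SS \to X$, $RRR \to X$, $SX \to XS$, $RX \to XR$, and $XX \to \epsilon$, where the commutation rules are justified by centrality of $X$. The lexicographic measure (total length, sum of positions of $X$-occurrences) strictly decreases under each rule, giving termination. The irreducible form has no $SS$ and no $RRR$, and all $X$'s have been pushed to the front with at most one surviving, so it is canonical.

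Step 2 (Uniqueness). Let $w_1,w_2$ be canonical with $\phi(w_1)=\phi(w_2)$. Write $w_i = X^{\gamma_i}\wt w_i$ where $\wt w_i$ is $X$-free, starts with an optional $S$, and then alternates single $S$'s with $R$-blocks equal to $R$ or $R^2$. Projecting to $\SL/\{\pm I\}$, the strings $\wt w_1,\wt w_2$ represent the same element of $\Z/2\Z * \Z/3\Z$ and are both in the standard reduced form for that free product. By the normal-form theorem for free products (see \cite{LS,MKS}), $\wt w_1$ and $\wt w_2$ agree letter by letter, so $\beta$, $n$, and every $\alpha_i$ coincide. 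The equality $\phi(w_1)=\phi(w_2)$ then reduces to $X^{\gamma_1}=X^{\gamma_2}$, which forces $\gamma_1=\gamma_2$ since $X$ has order two and $X\neq I$.

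The main obstacle is the uniqueness step, whose substance is the normal-form theorem for free products; without it, one cannot rule out accidental coincidences among canonical words. Existence is a routine terminating rewriting argument, the only mildly technical ingredient being the termination measure above.
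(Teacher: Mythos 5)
The paper does not prove Proposition \ref{prop:can}; it is stated as a known result and outsourced to the cited references \cite{LS,MKS,Ran}. Your proposal is, in effect, a correct reconstruction of exactly the argument those references supply, so the comparison here is ``paper cites, you prove.'' Your two halves are both sound. For existence, the rewriting system $SS\to X$, $RRR\to X$, $SX\to XS$, $RX\to XR$, $XX\to\epsilon$ on $\{S,R,X\}^*$ is $\sim$-preserving (all five equivalences follow from $\phi(S)^2=\phi(R)^3=\phi(X)=-I$ central of order two), your lexicographic measure does strictly decrease on every rule, and an irreducible word over $\{X,S,R\}$ with no occurrence of $SS$, $RRR$, $SX$, $RX$, $XX$ necessarily has at most one $X$, in initial position, which is precisely the shape of a canonical word with $\delta=0$. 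For uniqueness, stripping the $X^\gamma$ prefix from a canonical word gives a word whose image in $\mathrm{PSL}(2,\Z)\cong\Z/2\Z*\Z/3\Z$ is in reduced syllable form (the interior $R$-blocks have exponent $1$ or $2$ by the $\alpha_i\in\{1,2\}$ constraint, so no interior syllable is trivial), and the free-product normal-form theorem then forces the stripped words to coincide syllable by syllable; cancelling and comparing the leftover $X^{\gamma_i}$ in the order-two centre finishes it. You also correctly identify the free-product normal form as the one non-elementary ingredient, which is precisely the content the authors chose to import by citation rather than reprove. The only cosmetic caveat is that the bijection between canonical $\{S,R\}^*$-words and reduced free-product words deserves one explicit line (as sketched above) rather than being left implicit in ``both in the standard reduced form,'' but the claim is true and the argument is complete.
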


\begin{corollary}\label{cor:can}
For every matrix $M\in \GL$, there is a unique canonical word $w$ such that $M=w$.
\end{corollary}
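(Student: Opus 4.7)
The plan is to reduce the $\GL$ case to the $\SL$ case handled by Proposition~\ref{prop:can}, using the fact that $\phi(N)$ is the only generator with determinant $-1$ and that it squares to the identity. Concretely, a direct computation shows $\phi(N)^2=I$, and $\det(\phi(X))=\det(\phi(S))=\det(\phi(R))=1$ while $\det(\phi(N))=-1$.

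For existence, I would split on the determinant of $M\in\GL$. If $\det(M)=1$, then $M\in\SL$ and Proposition~\ref{prop:can} gives a canonical word $w$ (necessarily with $\delta=0$, since $\phi(N)\notin\SL$). If $\det(M)=-1$, then $M' = \phi(N)M$ has determinant $1$, so $M'\in\SL$ and Proposition~\ref{prop:can} yields a canonical word $w'$ with $M'=\phi(w')$ and $w'$ containing no $N$. Then $w = Nw'$ is a canonical word with $\delta=1$, and $\phi(Nw')=\phi(N)M' = \phi(N)^2 M = M$.

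For uniqueness, suppose $M=\phi(w_1)=\phi(w_2)$ for canonical words $w_1,w_2$. The number of $N$'s in a canonical word is $\delta\in\{0,1\}$, while the remaining letters all contribute determinant $1$; hence $\det(M)=(-1)^{\delta_1}=(-1)^{\delta_2}$ forces $\delta_1=\delta_2$. If both are $0$, neither word contains $N$ and $M\in\SL$, so Proposition~\ref{prop:can} gives $w_1=w_2$. If both are $1$, write $w_i = Nw_i'$ with $w_i'$ a canonical word not containing $N$; then $\phi(N)\phi(w_1')=\phi(N)\phi(w_2')$, and left-multiplying by $\phi(N)^{-1}$ gives $\phi(w_1')=\phi(w_2')\in\SL$, whence $w_1'=w_2'$ by Proposition~\ref{prop:can}, and therefore $w_1=w_2$.

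Since the reduction to $\SL$ is immediate and Proposition~\ref{prop:can} does all the real work, there is no genuine obstacle here; the only thing to be careful about is making sure the definition of \emph{canonical word} permits $N$ exactly in the position that the reduction produces (namely, a single $N$ at the very front, followed by an $\SL$-canonical word), which matches the definition given above.
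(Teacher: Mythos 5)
Your proof is correct and follows essentially the same reduction as the paper's: split on $\det(M)$ and multiply by $\phi(N)$ (which is its own inverse) to reduce to the $\SL$ case of Proposition~\ref{prop:can}. Your uniqueness argument is spelled out a bit more explicitly than the paper's (which leaves the determinant-forces-$\delta$ step implicit), but the underlying idea is identical.
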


\begin{proof}
If $\det(A)=1$, that is, $M\in \SL$, then by Proposition \ref{prop:can} there is a unique canonical word $w$ such that $M=w$. If $\det(A)=-1$, then $N^{-1}M\in \SL$ and again by Proposition \ref{prop:can} there is a unique canonical word $w$ such that $N^{-1}M=w$ or $M=Nw$. Note that $Nw$ is also a canonical word since $w$ does not contain letter $N$.
\end{proof}

\begin{proposition}\label{MPReg}
There is an algorithm that for any regular subset $\s\sub \GL$ and a matrix $M\in \GL$ decides whether $M\in \s$.
\end{proposition}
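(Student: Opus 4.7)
The plan is to exploit the uniqueness of canonical words (Corollary~\ref{cor:can}) by reducing the problem to membership in a regular language of canonical words. Given $\s = \phi(L(\A))$ for an automaton $\A$ witnessing $\s$ as a regular subset of $\GL$, the goal is to construct an automaton $\mathrm{Can}(\A)$ whose accepted language is exactly $C_\s := \{w_M : M \in \s\}$. Since a canonical word $w_M$ for any given $M \in \GL$ can be computed algorithmically, the question $M \in \s$ then reduces to the decidable question $w_M \in L(\mathrm{Can}(\A))$.

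To compute $w_M$, I would first pull off a leading $N$ when $\det(M) = -1$ to reduce to the case $M \in \SL$, and then apply a Euclidean-style procedure---alternately right-multiplying by $S$ and appropriate powers of $R$ to shrink the entries---as in the constructive proof of Proposition~\ref{prop:can}. A leading $X$ may need to be adjusted at the end to put the word into canonical form. This terminates in polynomially many steps in the bit-length of $M$.

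The heart of the proof is the construction of $\mathrm{Can}(\A)$. Note that $L(\A) \cap C$, where $C$ is the (manifestly regular) language of canonical words, is \emph{not} what we want: it only captures those canonical words which happen to lie in $L(\A)$, not all canonical representatives of matrices in $\phi(L(\A))$. To obtain $C_\s$ one must close $L(\A)$ under the equivalence $\sim$, generated by the finite set of local rewrite rules $SS \sim X$, $RRR \sim X$, $XX \sim \epsilon$, $NN \sim \epsilon$, $NX \sim XN$, and rules expressing $NS$ and $NR$ as words ending in $N$ (together with their reverses). The arbitrary $\sim$-closure of a regular language need not be regular, but the restriction to canonical representatives is---essentially because $\GL$ is virtually free ($\mathrm{PSL}(2,\Z) \cong \Z/2\Z * \Z/3\Z$ has finite index in $\mathrm{PGL}(2,\Z)$), and virtually free groups are automatic with regular normal forms for all rational subsets. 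Concretely, one builds a finite-state transducer $T$ realizing canonicalization and takes $\mathrm{Can}(\A)$ to be an automaton for $T(L(\A)) \cap C$. Making this transducer construction explicit within the alphabet $\Sigma$---so that the automatic structure of $\GL$ is visible at the level of automata---is the main technical challenge; once it is in place, membership of $w_M$ in $L(\mathrm{Can}(\A))$ is decided by standard automaton simulation.
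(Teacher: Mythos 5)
Your overall plan matches the paper's: use Corollary~\ref{cor:can} to compute the unique canonical word $w_M$, build an automaton $\mathrm{Can}(\A)$ whose language is exactly the set of canonical representatives of $\s=\phi(L(\A))$, and test $w_M\in L(\mathrm{Can}(\A))$. You also correctly observe that $L(\A)\cap C$ (where $C$ is the language of canonical words) would miss matrices whose only witnesses in $L(\A)$ are non-canonical, so one must in effect close $L(\A)$ under $\sim$ before restricting to $C$. The gap is in the proposed realization: canonicalization is \emph{not} a finite-state (rational) transduction. If $T$ were a rational transduction sending each $w\in\Sigma^*$ to the canonical word for $\phi(w)$ (and more generally any rational $T$ with $T(w)\cap C=\{w_{\phi(w)}\}$), then $T^{-1}(\epsilon)=\{w\in\Sigma^*:\phi(w)=I\}$ would be a regular language; but this is the word problem of the infinite group $\GL$, which cannot be regular (Anisimov: a finitely generated group has regular word problem iff it is finite). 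Informally, the unbounded cancellation needed to canonicalize, say, $(SSSS)^k$ to $\epsilon$ cannot be performed in one left-to-right pass with finite memory, so $T(L(\A))\cap C$ is not a legitimate construction.

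The paper instead builds $\mathrm{Can}(\A)$ by a \emph{saturation} (fixed-point) procedure on the automaton rather than by a transduction, which is exactly the Benois-style argument your appeal to virtual freeness of $\GL$ gestures at. Concretely, for the identities $X\sim NXN$, $S\sim NXSN$, $R\sim NSR^2SN$, $XX\sim\epsilon$, $NN\sim\epsilon$, $SS\sim X$, $RRR\sim X$, $S\sim XSX$, $R\sim XRX$, it first adds, for each transition, a fresh path spelling the longer side of the relevant identity, and then \emph{iteratively} adds shortcut transitions and $\epsilon$-transitions between pairs of existing states that are already joined by a path spelling the other side of a rule, repeating until no further transitions can be added; termination holds because the saturation loop only adds transitions between a fixed finite set of states. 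Each such addition preserves $\phi(L)$, and the procedure is interleaved with intersections against the regular languages forbidding interior $N$'s, then the factors $XX$, $SX^\alpha S$, $RX^{\alpha_1}RX^{\alpha_2}R$, and finally all non-canonical words. To repair your proof you should replace the transducer by such a saturation argument (or cite, with a complete proof, an effective regular-normal-form theorem for rational subsets of virtually free groups).
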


\begin{proof}
Let $L$ be a regular language such that $\s=\phi(L)$, and let $\A$ be a finite automaton that recognizes $L$, that is, $L=L(\A)$. The words in $L$ do not have to be in canonical form. So, we will construct a new automaton $\mathrm{Can}(\A)$ whose language contains only canonical words and such that $\mathrm{Can}(\A)$ is equivalent to $\A$, that is, $\phi(L(\mathrm{Can}(\A)))=\phi(L(\A))=\s$. The construction of $\mathrm{Can}(\A)$ consists of a sequence of transformations that insert new paths and $\epsilon$-transitions into $\A$. The detailed description of this construction is given in Section \ref{Can} of the Appendix.

Using the automaton $\mathrm{Can}(\A)$ we can decide whether $M\in \s$. Indeed, by Corollary \ref{cor:can}, there is a unique canonical word $w$ that represents the matrix $M$, i.e., $M=\phi(w)$. Now we have the following equivalence: $M\in \s$ if and only if $w\in L(\mathrm{Can}(\A))$. Therefore, to decide whether $M\in \s$, we need to check whether $w$ is accepted by $\mathrm{Can}(\A)$.

\end{proof}

Note that any finitely generated semigroup $\la M_1,\dots,M_n\ra$ in $\GL$ is a regular subset. Indeed, let $w_1,\dots,w_n$ be canonical words that represent the matrices $M_1,\dots,M_n$, respectively, and consider a regular language $L={(w_1+\cdots+w_n)}^*$. Clearly $\phi(L)=\la M_1,\dots,M_n\ra$, and hence the semigroup $\la M_1,\dots,M_n\ra$ is regular.
So as a corollary from Proposition \ref{MPReg} we obtain the decidability of the membership problem for semigroups in $\GL$.

\begin{corollary}\label{cor:GL}
The membership problem for $\GL$ is decidable. That is, there is an algorithm that for a given finite collection of matrices $M_1,\dots,M_n$ and $M$ from $\GL$, decides whether $M\in \la M_1,\dots,M_n\ra$.
\end{corollary}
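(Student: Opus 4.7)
The plan is to reduce this directly to Proposition~\ref{MPReg} by showing that every finitely generated semigroup in $\GL$ is a regular subset in the sense of the definition above.

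First, I would use Corollary~\ref{cor:can} to replace each generator $M_i$ by a canonical word $w_i \in \Sigma^*$ with $\phi(w_i) = M_i$. This step needs an algorithmic version of the corollary: given an explicit matrix in $\GL$, produce a canonical representative. One clean way is to perform an Euclidean-style reduction on the columns of $M_i$ using the generators $S$ and $R$ (and normalising the sign with $X$ and the determinant with $N$ at the end); this mirrors the standard proof that the generators in question indeed generate $\GL$, and it terminates in polynomially many steps in the bit size of $M_i$. Alternatively, one can enumerate canonical words in order of length and stop at the first whose $\phi$-image equals $M_i$; uniqueness from Corollary~\ref{cor:can} guarantees this terminates.

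Having the canonical words $w_1,\dots,w_n$, I would then form the regular language
\[
L \;=\; (w_1 + w_2 + \cdots + w_n)^* \;\sub\; \Sigma^*,
\]
which is regular by construction (a finite union of finite words under Kleene star, effectively built as a small automaton with $\epsilon$-transitions). Since the morphism $\phi$ preserves concatenation, any product of the $M_i$ is the $\phi$-image of a concatenation of the corresponding $w_i$, and conversely. Hence $\phi(L) = \la M_1,\dots,M_n\ra$, which shows that $\la M_1,\dots,M_n\ra$ is a regular subset of $\GL$.

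At this point the membership problem collapses to an instance of Proposition~\ref{MPReg} with $\s = \la M_1,\dots,M_n\ra$: we can decide whether $M \in \s$ algorithmically. The only real piece of work is the first step, namely turning a given matrix into a canonical word; the rest is the routine construction of a regular expression from finitely many finite strings. The main obstacle, such as it is, is therefore the effective version of Corollary~\ref{cor:can}, and this is handled by any of the standard constructive proofs that $\{X,N,S,R\}$ generates $\GL$.
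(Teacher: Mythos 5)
Your proposal matches the paper's proof exactly: it observes that $\la M_1,\dots,M_n\ra = \phi\big((w_1+\cdots+w_n)^*\big)$ is a regular subset of $\GL$ and then invokes Proposition~\ref{MPReg}. The extra remarks on how to effectively obtain the canonical words are a reasonable elaboration, but the core argument is the same.
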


\subsection{Special case: $A_1M_1A_2=M_2$}\label{1mat}

In this section we show that for any two nonsingular matrices $M_1$ and $M_2$ from $\Z^{2\times 2}$ and regular subsets $\s_1$ and $\s_2$, it is decidable whether there exist matrices $A_1\in \s_1$ and $A_2\in \s_2$ such that $A_1M_1A_2=M_2$ (Corollary \ref{cor:1diag}). First, we prove this statement in the case when $M_1=M_2=D$, where $D$ is a diagonal matrix in the Smith normal form (Proposition \ref{prop:1diag}).

For the proof of this result we will use a few algebraical facts and results that are explained below. The most important of them is the following theorem about the Smith normal form of a matrix.

\begin{theorem}[Smith normal form \cite{KB79}] \label{SNF}
For any matrix $A\in \Z^{2\times 2}$, there are matrices $E,F$ from $\GL$ such that
\[
A=E\begin{bmatrix} t_1 & 0\\ 0 & t_2 \end{bmatrix}F
\]
for some $t_1,t_2\in \Z$ such that $t_1 \mid t_2$. The diagonal matrix $\begin{bmatrix} t_1 & 0\\ 0 & t_2 \end{bmatrix}$, which is unique up to the signs of $t_1$ and $t_2$, is called the \emph{Smith normal form} of $A$. Moreover, $E$, $F$, $t_1$, and $t_2$ can be computed in polynomial time.
\end{theorem}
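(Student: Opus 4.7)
The plan is to prove existence constructively via elementary row and column operations, then deduce uniqueness from invariants, and finally observe that the procedure runs in polynomial time. Recall that $\GL$ is generated by the matrices that implement \emph{row swap}, \emph{negation of a row}, and \emph{adding an integer multiple of one row to another}, and similarly for columns. Left-multiplying $A$ by such a matrix performs the corresponding row operation on $A$, and right-multiplying performs the column operation. Therefore it suffices to exhibit a finite sequence of such operations that reduces $A$ to a diagonal matrix $\mathrm{diag}(t_1,t_2)$ with $t_1 \mid t_2$; accumulating the row operations into a single matrix $E^{-1}\in \GL$ and the column operations into $F^{-1}\in \GL$ then yields the desired factorization.

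For the reduction, if $A=0$ we take $t_1=t_2=0$. Otherwise an inner loop proceeds as follows. Move a nonzero entry of smallest absolute value into position $(1,1)$ by row and column swaps, then negate if necessary to make it positive. Clear the $(2,1)$ entry by Euclidean division: write $a_{21} = q\, a_{11} + r$ with $|r| < |a_{11}|$, subtract $q$ times row $1$ from row $2$, and if $r\neq 0$ swap rows so that the $(1,1)$ entry strictly decreases in absolute value. Do the same for the $(1,2)$ entry using column operations. Iterating, $|a_{11}|$ cannot decrease forever, so eventually both $(1,2)$ and $(2,1)$ vanish, leaving a diagonal matrix $\mathrm{diag}(t_1,t_2)$. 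To enforce $t_1\mid t_2$, I add row $2$ to row $1$ to obtain $t_2$ in position $(1,2)$ and rerun the inner loop; the new $(1,1)$ entry is $\gcd(t_1,t_2)$, which is a strict divisor of the old $t_1$ whenever divisibility previously failed. This outer loop therefore terminates after at most $\log_2|t_1|$ iterations, yielding the required form.

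Uniqueness up to the signs of $t_1$ and $t_2$ follows from two invariants. First, both row and column operations preserve the ideal of $\Z$ generated by all entries of the matrix, because they are invertible $\Z$-linear combinations of the rows (respectively columns); hence $|t_1|$ equals the gcd of all entries of $A$. Second, taking determinants in $A = E\,\mathrm{diag}(t_1,t_2)\,F$ gives $\det(A) = \det(E)\det(F)\cdot t_1 t_2 = \pm\,t_1 t_2$, so $|t_2| = |\det(A)/t_1|$ is also determined by $A$.

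Finally, for polynomial-time computability, the main potential concern is blow-up of intermediate entries, and this is the part I expect to need the most care. In dimension two, however, the concern evaporates: each inner-loop step either strictly decreases $|a_{11}|$ or replaces one off-diagonal entry by its remainder modulo $a_{11}$, so every entry throughout the computation is bounded in absolute value by $\max_{i,j}|A_{ij}|$. The total number of Euclidean division steps is $O(\log\max_{i,j}|A_{ij}|)$ by the standard analysis of the Euclidean algorithm, and each arithmetic operation on these bounded integers takes time polynomial in the bit length of $A$. The difficulties that force modular algorithms for general $n\times n$ Smith form are thus entirely absent here.
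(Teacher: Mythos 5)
This theorem appears in the paper only as a citation to~\cite{KB79}; the paper does not give a proof, so there is no ``paper's proof'' to compare against. Your existence and uniqueness arguments are correct and standard: the elementary row/column reduction with the Euclidean step gives existence, and the two invariants (gcd of all entries, and the determinant up to sign) give uniqueness of $|t_1|$ and $|t_2|$.

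The polynomial-time paragraph, however, contains a genuine gap. You claim that every entry stays bounded in absolute value by $\max_{i,j}|A_{ij}|$ because each inner-loop step ``either strictly decreases $|a_{11}|$ or replaces one off-diagonal entry by its remainder modulo $a_{11}$.'' This overlooks that subtracting $q$ times row~$1$ from row~$2$ changes \emph{two} entries: $(2,1)$ becomes the small remainder, but $(2,2)$ becomes $a_{22}-q\,a_{12}$, which can be much larger than any original entry. Concretely, starting from $\begin{bmatrix} 2 & 1000\\ 5 & 3\end{bmatrix}$, the smallest entry $2$ is already at $(1,1)$; clearing $(2,1)$ uses $q=2$ and yields $\begin{bmatrix} 2 & 1000\\ 1 & -1997\end{bmatrix}$, and $1997$ already exceeds the original maximum $1000$. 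So the ``concern'' you say evaporates in dimension two does not: entries can grow, and the polynomial-time claim requires an actual bound on this growth (which is precisely the content of~\cite{KB79}, and which can also be obtained in the $2\times 2$ case by bounding the accumulated unimodular factors produced by the extended Euclidean algorithm rather than by the incorrect assertion that entries never increase). The decidability and uniqueness parts of the theorem are unaffected, but as written the complexity claim is not justified.
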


\begin{definition}
If $H$ is a subgroup of $G$, then the sets $gH = \{gh\ :\ h\in H\}$ and $Hg = \{hg\ :\ h\in H\}$, for $g\in G$, are called the \emph{left} and \emph{right cosets} of $H$ in $G$, respectively. An element $g$ is called a \emph{representative} of the left coset $gH$ (respectively, of the right coset $Hg$).

The collection of left cosets or right cosets of $H$ form a disjoint partition of $G$. Moreover, the number of left cosets is equal to the number of right cosets, and this number is called the \emph{index} of $H$ in $G$, denoted $|G:H|$. 
\end{definition}

For every natural $n\geq 1$, let us define the following subgroups of $\GL$:
%\[
%H(n) = \left\{\begin{bmatrix} a_{11} & a_{12}\\ a_{21} & a_{22}\end{bmatrix}\in \GL : n \text{ divides } a_{21}\right\},\ 
%F(n) = \left\{\begin{bmatrix} a_{11} & a_{12}\\ a_{21} & a_{22}\end{bmatrix}\in \GL : n \text{ divides } a_{12}\right\}.
%\]
\[
\begin{split}
H(n) &= \left\{\begin{bmatrix} a_{11} & a_{12}\\ a_{21} & a_{22}\end{bmatrix}\in \GL\ :\ n \text{ divides } a_{21}\right\},\\ 
F(n) &= \left\{\begin{bmatrix} a_{11} & a_{12}\\ a_{21} & a_{22}\end{bmatrix}\in \GL\ :\ n \text{ divides } a_{12}\right\}.
\end{split}
\]

Let $A=\begin{bmatrix} a_{11} & a_{12}\\ a_{21} & a_{22} \end{bmatrix}$ be any matrix from $\GL$ and let $D=\begin{bmatrix} m & 0\\ 0 & mn \end{bmatrix}$ be a diagonal matrix in the Smith normal form, where $m,n\neq 0$. Then the conjugation of $A$ with $D$ is equal to
\[
A^D = D^{-1}AD = \begin{bmatrix} a_{11} & na_{12}\\ \frac{1}{n}a_{21} & a_{22} \end{bmatrix}\!.
\]

From this formula we see that if $A^D\in \GL$, then $n$ divides $a_{21}$. On the other hand, if $a_{21}$ is divisible by $n$, then $A^D$ is in $\GL$, and in fact in $F(n)$. Thus we have the following criterion.
\begin{proposition}\label{prop:conj}
Suppose $A$ is in $\GL$ and $D$ is a diagonal matrix of the above form, then $A^D\in \GL$ if and only if $A\in H(n)$. Moreover, if $A\in H(n)$, then $A^D\in F(n)$.
\end{proposition}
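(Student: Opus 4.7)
The plan is to read off everything directly from the explicit formula for $A^D$ that the authors compute just before the statement, namely
\[
A^D \;=\; D^{-1}AD \;=\; \begin{bmatrix} a_{11} & na_{12}\\ \tfrac{1}{n}a_{21} & a_{22} \end{bmatrix}\!,
\]
together with the multiplicativity of the determinant. So the proof will be essentially a direct verification in two short steps.

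First, I would handle the determinant condition. Since $\det(A^D) = \det(D^{-1})\det(A)\det(D) = \det(A)$, and $A \in \GL$ gives $\det(A) = \pm 1$, the matrix $A^D$ automatically has determinant $\pm 1$. Hence $A^D \in \GL$ if and only if all four entries of $A^D$ are integers. Inspecting the formula above, the entries $a_{11}$, $a_{22}$, and $na_{12}$ are integers for free (as $n \in \Z$ and $A \in \Z^{2\times 2}$), so the only nontrivial requirement is that $\tfrac{1}{n}a_{21}$ be an integer, which is exactly the condition $n \mid a_{21}$, i.e., $A \in H(n)$. This establishes the equivalence.

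Second, for the ``moreover'' part, suppose $A \in H(n)$, so that $A^D \in \GL$ by what we just proved. To conclude $A^D \in F(n)$, I only need to check that $n$ divides the $(1,2)$-entry of $A^D$. But that entry is $na_{12}$, which is plainly divisible by $n$. So $A^D \in F(n)$.

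There is no real obstacle here; the whole content of the proposition is packaged into the conjugation formula stated just before it, and the proof is a two-line check. The only thing worth being careful about is noting that the determinant of $A^D$ equals that of $A$ (so nothing needs to be imposed from the $\det = \pm 1$ side), which makes the integrality of $\tfrac{1}{n}a_{21}$ the single governing condition.
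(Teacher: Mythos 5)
Your proof is correct and matches the paper's approach exactly: the paper establishes the proposition in the paragraph preceding it by computing the conjugation formula and reading off the divisibility condition on $a_{21}$ for integrality (and on the $(1,2)$-entry for the $F(n)$ claim). The only addition in your write-up is the explicit observation that $\det(A^D)=\det(A)=\pm 1$, which the paper leaves implicit but is a worthwhile clarification.
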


\begin{theorem}\label{thm:ind}
The subgroups $H(n)$ and $F(n)$ have finite index in $\GL$. Furthermore, there is an algorithm that for a given $n$ computes representatives of the left and right cosets of $H(n)$ and $F(n)$ in $\GL$.
\end{theorem}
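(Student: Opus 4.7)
The plan is to reduce the problem to a finite quotient via the natural homomorphism $\pi_n : \GL \to \mathrm{GL}(2,\Z/n\Z)$ that reduces each matrix entry modulo $n$. Its image $\Gamma_n$ lies inside the finite group $\mathrm{GL}(2,\Z/n\Z)$ and is therefore finite. The key observation is that both $H(n)$ and $F(n)$ contain $\ker \pi_n$, since any matrix congruent to the identity modulo $n$ automatically has both off-diagonal entries divisible by $n$. Under $\pi_n$, the subgroup $H(n)$ is the preimage of $\bar H := \Gamma_n \cap \{\text{upper-triangular matrices}\}$, and $F(n)$ is the preimage of $\bar F := \Gamma_n \cap \{\text{lower-triangular matrices}\}$. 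By the correspondence theorem the cosets of $H(n)$ in $\GL$ biject with the cosets of $\bar H$ in $\Gamma_n$, and similarly for $F(n)$, so both indices are finite.

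For the effective part, first compute $\Gamma_n$ explicitly: invoking the classical fact that $\SL \to \mathrm{SL}(2,\Z/n\Z)$ is surjective (proved constructively by elementary row operations), $\Gamma_n$ is exactly the set of matrices in $\mathrm{GL}(2,\Z/n\Z)$ whose determinant is $\pm 1$. Enumerate this finite set, partition it into left (respectively right) cosets of $\bar H$, and pick one representative $\bar g_i$ per coset. Finally lift each $\bar g_i$ to a matrix $g_i \in \GL$: lift its entries arbitrarily to integers, adjust the second row by an integer multiple of $n$ so that the determinant becomes exactly $1$ (using Bezout on the first row together with $n$), and if the target determinant was $-1$ modulo $n$, premultiply by $\phi(N)$. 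The same procedure with $\bar F$ in place of $\bar H$ produces representatives for $F(n)$, and right-coset representatives are obtained by running the same algorithm on the opposite partition.

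The main obstacle, such as it is, is the effective lifting step, but this is a standard constructive argument with no real difficulty; the remaining ingredients---enumeration inside the finite group $\mathrm{GL}(2,\Z/n\Z)$, partitioning $\Gamma_n$ into cosets of a subgroup, and the containment $\ker\pi_n \sub H(n) \cap F(n)$---are all routine. The whole construction runs in time polynomial in $n$, which is what one expects to match the ambient EXPSPACE bound stated in the remark after Theorem \ref{thm:mem}.
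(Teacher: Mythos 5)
Your proposal is correct, but it takes a genuinely different route from the paper. The paper's proof avoids passing to a quotient group: it directly writes down candidate left-coset representatives $W_{i,j}$, indexed by the residues $(i,j)$ of the first column of a matrix modulo $n$, using Bezout coefficients of $i/\gcd(i,j)$ and $j/\gcd(i,j)$, and then checks by a one-line computation that $W_{i,j}A\in H(n)$ whenever $a_{11}\equiv i$, $a_{21}\equiv j\pmod n$. This yields at most $n^2$ representatives at once with no reference to $\mathrm{GL}(2,\Z/n\Z)$ or to the surjectivity of the reduction map. Your approach instead recognises $H(n)$ and $F(n)$ as congruence subgroups --- preimages of the (finite) upper- and lower-triangular subgroups of $\Gamma_n=\pi_n(\GL)\subseteq\mathrm{GL}(2,\Z/n\Z)$ --- and invokes the correspondence theorem, so finiteness of the index is immediate once one notices $\ker\pi_n\subseteq H(n)\cap F(n)$. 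That observation is the conceptual core and is correct. What your route buys is a cleaner structural picture and an argument that transfers verbatim to higher dimensions or to other congruence conditions; what the paper's route buys is self-containment and a concrete list of representatives with no need for the (nontrivial, though classical) surjectivity of $\SL\to\mathrm{SL}(2,\Z/n\Z)$.

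One small caution on the effective lifting step: ``lift its entries arbitrarily to integers, adjust the second row by an integer multiple of $n$'' is not quite enough as stated. To solve $a\beta-b\alpha=-k$ (where $\det g=1+kn$) you need $\gcd(a,b)=1$, and an arbitrary lift of the first row need not be coprime even though $\gcd(a,b,n)=1$ is guaranteed. For instance with $n=5$ and reduced first row $(2,4)$, the naive lift $(2,4)$ has $\gcd 2$, so the Bezout equation has no solution for odd $k$; one must first replace $a$ by $a+n\alpha'$ (or $b$ by $b+n\beta'$) to make the lifted row coprime, which a CRT argument shows is always possible. Your parenthetical ``using Bezout on the first row together with $n$'' gestures at this, but the sentence as written reads as if only the second row is ever modified. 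This is a standard and effectively doable refinement, so it does not invalidate the proof, but it should be made explicit.
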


\begin{proof}
We will only show how to compute representatives of the left cosets of $H(n)$ because the other cases are similar. For each pair of indices $i,j$ such that $0\leq i,j\leq n-1$, let us define a matrix $W_{i,j}$ as follows. Let $W_{i,0}$ be the identity matrix for $i=0,\dots,n-1$. If $j>0$, then consider $d=\gcd(i,j)$ and let $i_0$ and $j_0$ be such that $i=i_0d$ and $j=j_0d$. Since $i_0,j_0$ are relatively prime, there exist integers $u$ and $v$ such that $ui_0+vj_0=1$. Hence if we let $W_{i,j}=\begin{bmatrix} u & v\\ -j_0 & i_0 \end{bmatrix}$, then $W_{i,j}$ belongs to $\GL$.

Now consider an arbitrary matrix $A=\begin{bmatrix} a_{11} & a_{12}\\ a_{21} & a_{22} \end{bmatrix}$ from $\GL$. Let $a_{11}=i+nk$ and $a_{21}=j+nl$, where $0\leq i,j\leq n-1$. We will show that $W_{i,j}A\in H(n)$. If $j=0$, then $a_{21}=nl$ is divisible by $n$, and hence $A\in H(n)$. Since we defined $W_{i,0}$ to be the identity matrix, it follows that $W_{i,0}A = A\in H(n)$. If $j>0$, then let $d=\gcd(i,j)$ and let $i_0,j_0$ be such that $i=i_0d$ and $j=j_0d$. In this case
\[
W_{i,j}A = \begin{bmatrix} u & v\\ -j_0 & i_0 \end{bmatrix}\begin{bmatrix} di_0+nk & a_{12}\\ dj_0+nl & a_{22} \end{bmatrix}\!,
\]
and the lower left corner of $W_{i,j}A$ is equal to $-j_0di_0-j_0nk+i_0dj_0+i_0nl = n(-j_0k+i_0l)$, which is divisible by $n$. Thus $W_{i,j}A\in H(n)$.

So we showed that for any matrix $A\in \GL$ there is a pair $i,j$ such that $W_{i,j}A\in H(n)$ or, equivalently, $A\in W^{-1}_{i,j}H(n)$. Therefore, the collection $\{W^{-1}_{i,j}H(n)\ :\ 0\leq i,j\leq n-1\}$ contains all left cosets of $H(n)$ in $\GL$. In particular, the index of $H(n)$ in $\GL$ is bounded by $n^2$.

Note that some of the cosets in $\{W^{-1}_{i,j}H(n)\ :\ 0\leq i,j\leq n-1\}$ may be equal to each other. In fact, two cosets $W^{-1}_{i_1,j_1}H(n)$ and $W^{-1}_{i_2,j_2}H(n)$ are equal if and only if $W^{}_{i_1,j_1}W^{-1}_{i_2,j_2}\in H(n)$. Since the domain of the subgroup $H(n)$ is a computable set, the equality of two cosets is a decidable property. Therefore, we can algorithmically choose a collection of pairwise nonequivalent representatives of the left cosets of $H(n)$ in $\GL$.

\end{proof}

\begin{lemma}
\label{lem:regsub}
Let $L_{H(n)}$ and $L_{F(n)}$ be the languages that correspond to the subgroups $H(n)$ and $F(n)$, respectively, that is, $L_{H(n)} = \{w\in \Sigma^* : \phi(w)\in H(n)\}$ and $L_{F(n)} = \{w\in \Sigma^* : \phi(w)\in F(n)\}$. Then $L_{H(n)}$ and $L_{F(n)}$ are regular languages.
\end{lemma}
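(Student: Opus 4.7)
The plan is to build finite automata recognizing $L_{H(n)}$ and $L_{F(n)}$ directly from the coset structure guaranteed by Theorem \ref{thm:ind}. Since the two constructions are completely parallel, I will describe only the case of $L_{H(n)}$ in detail.

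First, I would use Theorem \ref{thm:ind} to compute representatives $g_1,\dots,g_k$ of the right cosets of $H(n)$ in $\GL$, with $g_1$ chosen so that $H(n)g_1 = H(n)$. I then define a deterministic finite automaton $\A_{H(n)}$ whose states are these $k$ cosets, whose initial state and unique accepting state is $H(n)g_1$, and whose transition on a letter $a\in\Sigma$ sends $H(n)g_i$ to the unique coset containing $g_i\phi(a)$.

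Second, I would verify well-definedness and effectiveness of this construction. Well-definedness is immediate, since right multiplication by $\phi(a)$ permutes the right cosets of $H(n)$: if $g\in H(n)g_i$, then $g\phi(a)\in H(n)g_i\phi(a)$, and this set is entirely contained in one right coset. Effectiveness follows because deciding which coset contains $g_i\phi(a)$ reduces to checking whether $g_i\phi(a)g_j^{-1}\in H(n)$ for each $j\in\{1,\dots,k\}$, and this is just a single divisibility test on the lower-left entry of an explicitly computed integer matrix.

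Third, a routine induction on $|w|$ shows that after reading a prefix $w'$ the automaton sits in state $H(n)g_i$ precisely when $\phi(w')\in H(n)g_i$. Therefore $\A_{H(n)}$ accepts $w$ iff $\phi(w)\in H(n)g_1 = H(n)$, so $L(\A_{H(n)}) = L_{H(n)}$, witnessing that $L_{H(n)}$ is regular. The identical construction applied to $F(n)$, whose finite index and coset representatives are also supplied by Theorem \ref{thm:ind}, shows that $L_{F(n)}$ is regular. There is no serious obstacle here: the lemma is essentially a packaging of the finite-index fact from Theorem \ref{thm:ind} together with the observation that membership in $H(n)$ and $F(n)$ is decidable by a single divisibility check on one matrix entry.
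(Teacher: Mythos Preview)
Your proposal is correct and follows essentially the same approach as the paper: both build a deterministic finite automaton whose states are the finitely many right cosets of $H(n)$ (with the trivial coset as initial and unique accepting state) and whose transition on $\sigma$ is right multiplication by $\phi(\sigma)$, then verify that after reading $w$ the automaton is in the coset containing $\phi(w)$. The paper writes out the correctness argument as a telescoping product rather than an induction, but the content is identical.
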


\begin{proof}
We will show that $L_{H(n)}$ is regular by constructing an automaton $\A_{H(n)}$ that recognizes it. The proof for $L_{F(n)}$ is similar.

Let $U_0,U_1,\dots,U_k$ be pairwise nonequivalent representatives of the right cosets of $H(n)$ in $\GL$, which can be computed by Theorem \ref{thm:ind}. We will assume that $U_0=I$ and hence $H(n)U_0=H(n)$. The automaton $\A_{H(n)}$ will have $k$ states $u_0,u_1,\dots,u_k$, where $u_0$ is the only initial and the only final state of $\A_{H(n)}$. The transitions of $\A_{H(n)}$ are defined as follows: there is a transition from $u_i$ to $u_j$ labelled by $\sigma\in \Sigma$ if and only if the element $U_i\phi(\sigma)$ belongs to the coset $H(n)U_j$. Note that since for every $i$ and $\sigma$ there is exactly one $j$ such that $U_i\phi(\sigma)\in H(n)U_j$, the automaton $\A_{H(n)}$ is deterministic.

We now show that the language of $\A_{H(n)}$ is equal to $L_{H(n)}$. Take any word $w=\sigma_1\sigma_2\dots\sigma_t \in \Sigma^*$ and consider a run $\rho = u_{i_0}u_{i_1}\dots u_{i_t}$ of $\A_{H(n)}$ on $w$. Note that $i_0=0$, and $u_{i_0}=u_0$ is the initial state. Since $\A_{H(n)}$ has transitions $u_{i_{s-1}} \xrightarrow{\sigma_s} u_{i_s}$, for $s=1,\dots,t$, we have that $U_{i_{s-1}}\phi(\sigma_s)\in H(n)U_{i_s}$ and hence $U^{}_{i_{s-1}}\phi(\sigma_s)U_{i_s}^{-1}\in H(n)$. Since $U_{i_0}=U_0=I$, we can rewrite $\phi(w)=\phi(\sigma_1)\phi(\sigma_2)\dots\phi(\sigma_t)$ as
\[
\phi(w)=(U^{}_{i_0}\phi(\sigma_1)U_{i_1}^{-1})(U^{}_{i_1}\phi(\sigma_2)U_{i_2}^{-1})\cdots (U^{}_{i_{t-1}}\phi(\sigma_t)U_{i_t}^{-1})U^{}_{i_t}.
\]
If $u_{i_t}=u_0$, that is, if $w$ is accepted by $\A_{H(n)}$, then $i_t=0$ and $U_{i_t} = U_0=I\in H(n)$. This implies that $\phi(w)\in H(n)$ because for all $s=1,\dots,t$ we have $U^{}_{i_{s-1}}\phi(\sigma_s)U_{i_s}^{-1}\in H(n)$. On the other hand, if $\phi(w)\in H(n)$, then it must be that $U_{i_t}\in H(n)$, which can only happen if $i_t=0$ and hence $u_{i_t}=u_0$. This means that $w$ is accepted by $\A_{H(n)}$. Therefore, we proved that $L(\A_{H(n)}) = L_{H(n)}$.

\end{proof}

Now for any automaton $\A$ with alphabet $\Sigma$ we construct two automata $\mathrm{Inv}(\A)$ and $\F_D(\A)$, where $D$ is a diagonal matrix in the Smith normal form. The automaton $\mathrm{Inv}(\A)$ recognizes inverses to the words from $L(\A)$, that is:
\begin{enumerate}[(1)]
\item For every $w\in L(\A)$, there exists $w'\in L(\mathrm{Inv}(\A))$ such that $\phi(w')=\phi(w)^{-1}$.
\item For every $w'\in L(\mathrm{Inv}(\A))$, there exists $w\in L(\A)$ such that $\phi(w)=\phi(w')^{-1}$.
\end{enumerate}
In other words, for any matrix $A\in \GL$, $A\in \phi(L(\A))$ if and only if $A^{-1}\in \phi(L(\mathrm{Inv}(\A)))$.

{\bf Construction of the automaton $\mathrm{Inv}(\A)$.}
%For an automaton $\A$ in alphabet $\Sigma$ we construct an automaton $\mathrm{Inv}(\A)$ with the following properties:
We will make use of the following equivalences, which are easy to check: $X^{-1}\sim X$, $N^{-1}\sim N$, $S^{-1}\sim S^3$, and $R^{-1}\sim R^5$. Informally speaking, to construct $\mathrm{Inv}(\A)$ we want to reverse the transitions in $\A$ and replace the labels by their inverses. More formally, $\mathrm{Inv}(\A)$ will have the same states as $\A$ plus some newly added states as explained below. The initial states of $\mathrm{Inv}(\A)$ are the final states of $\A$, and the final states of $\mathrm{Inv}(\A)$ are the initial states of $\A$. For every transitions of the form $q\xrightarrow{X} q'$ and $q\xrightarrow{N} q'$ in $\A$ we add the transitions $q'\xrightarrow{X} q$ and $q'\xrightarrow{N} q$ to $\mathrm{Inv}(\A)$, respectively. Furthermore, for every transitions of the form $q\xrightarrow{S} q'$ and $q\xrightarrow{R} q'$ in $\A$ we add the paths $q'\xrightarrow{S} p_1 \xrightarrow{S} p_2 \xrightarrow{S} q$ and $q'\xrightarrow{R} p_3 \xrightarrow{R} p_4 \xrightarrow{R} p_5 \xrightarrow{R} p_6 \xrightarrow{R} q$ to $\mathrm{Inv}(\A)$, respectively, where $p_1,p_2,\dots,p_6$ are newly added states. It is not hard to verify that $\mathrm{Inv}(\A)$ has the desired properties.

\medskip
The purpose of the automaton $\F_D(\A)$ is to recognize conjugations of the words from $L(\A)$ with matrix $D$. To explain formally what this means, let $D=\begin{bmatrix} m & 0\\ 0 & mn \end{bmatrix}$ be a diagonal matrix in the Smith normal form, where $m,n\neq 0$. Recall that by Proposition \ref{prop:conj}, for any matrix $A\in \GL$, $A^D \in \GL$ if and only if $A\in H(n)$.
The automaton $\F_D(\A)$ will have the following properties:
\begin{enumerate}[(1)]
\item For every $w\in L(\A)\cap L_{H(n)}$, there exists $w'\in L(\F_D(\A))$ such that $\phi(w')=\phi(w)^D$.
\item For every $w'\in L(\F_D(\A))$, there exists $w\in L(\A)\cap L_{H(n)}$ such that $\phi(w)^D=\phi(w')$.
\end{enumerate}
In other words, we will have
\[
\phi(L(\F_D(\A))) = \{\phi(w)^D\ :\ \text{where } w\in L(\A) \text{ and } \phi(w)\in H(n)\}.
\]

{\bf Construction of the automaton $\F_D(\A)$.} Let $\A$ be a finite automaton in alphabet $\Sigma$ and let $D=\begin{bmatrix} m & 0\\ 0 & mn \end{bmatrix}$ be a diagonal matrix in the Smith normal form, where $m,n\neq 0$.

Suppose that $\A$ has the states $q_0,q_1,\dots,q_t$. Recall from the proof of Lemma \ref{lem:regsub} that the automaton $\A_{H(n)}$, which recognizes $L_{H(n)}$, has the states $u_0,u_1,\dots,u_k$, where $u_0$ is the only initial and also the only final state. First, we construct an automaton $\A'$ for the language $L(\A)\cap L_{H(n)}$ by taking the direct product of $\A$ and $\A_{H(n)}$. Namely, $\A'$ has the states $(q_i,u_j)$, for $i=0,\dots,t$ and $j=0,\dots,k$. The initial states of $\A'$ are of the form $(q_i,u_0)$, where $q_i$ is an initial state of $\A$, and the final states of $\A'$ are of the form $(q_i,u_0)$, where $q_i$ is a final state of $\A$. Furthermore, there is a transition from $(q_i,u_j)$ to $(q_{i'},u_{j'})$ labelled by $\sigma$ if and only if there are transitions $q_i\xrightarrow{\sigma} q_{i'}$ and $u_j\xrightarrow{\sigma} u_{j'}$ in $\A$ and $\A_{H(n)}$, respectively.

Next we replace every transition in $\A'$ by a new path as follows. Let $(q_{i_1},u_{j_1}) \xrightarrow{\sigma} (q_{i_2},u_{j_2})$ be a transition in $\A'$. So there must be a transition of the form $u_{j_1} \xrightarrow{\sigma} u_{j_2}$ in $\A_{H(n)}$. By construction of $\A_{H(n)}$ as described in Lemma \ref{lem:regsub}, we have $U_{j_1}\phi(\sigma)\in H(n)U_{j_2}$ or, equivalently, $U^{}_{j_1}\phi(\sigma)U^{-1}_{j_2}\in H(n)$, where $U_0,\dots,U_k$ are pairwise nonequivalent representatives of the right cosets of $H(n)$ in $\GL$, such that $U_0=I$. Hence $(U^{}_{j_1}\phi(\sigma)U^{-1}_{j_2})^D$ is a matrix with integer coefficients, that is, it belongs to $\GL$. Let $w=\sigma_1\dots \sigma_s\in \Sigma^*$ be a canonical word\footnote{Actually, we can take $w$ to be any word that represents $(U^{}_{j_1}\phi(\sigma)U^{-1}_{j_2})^D$. The fact that it is canonical is not important for our construction.} such that $\phi(w)=(U^{}_{j_1}\phi(\sigma)U^{-1}_{j_2})^D$. Then we replace the transition $(q_{i_1},u_{j_1}) \xrightarrow{\sigma} (q_{i_2},u_{j_2})$ by a path of the form
\[
(q_{i_1},u_{j_1}) \xrightarrow{\sigma_1} p_1 \xrightarrow{\sigma_2}\ \cdots\ \xrightarrow{\sigma_{s-1}} p_{s-1} \xrightarrow{\sigma_s} (q_{i_2},u_{j_2}),
\]
where $p_1,\dots,p_{s-1}$ are new states added to $\A'$. Let $\F_D(\A)$ be an automaton that we obtain after applying the above procedure to $\A'$.

%We need to show that the automaton $\F_D(\A)$ has the following properties:
%\begin{enumerate}[(1)]
%\item For every $w\in L(\A)\cap L_{H(n)}$, there exists $w'\in L(\F_D(\A))$ such that $\phi(w')=\phi(w)^D$.
%\item For every $w'\in L(\F_D(\A))$, there exists $w\in L(\A)\cap L_{H(n)}$ such that $\phi(w)^D=\phi(w')$.
%\end{enumerate}

To prove the first property of $\F_D(\A)$, take any $w=\sigma_1\dots \sigma_s\in L(\A)\cap L_{H(n)}$. Then there must be an accepting run $\rho=(q_{i_0},u_{j_0})(q_{i_1},u_{j_1})\dots (q_{i_s},u_{j_s})$ of $\A'$ on $w$. For every transition $(q_{i_{r-1}},u_{j_{r-1}})\xrightarrow{\sigma_r} (q_{i_r},u_{j_r})$ in the run $\rho$, there is a path in $\F_D(\A)$ from $(q_{i_{r-1}},u_{j_{r-1}})$ to $(q_{i_r},u_{j_r})$ labelled by a word $w_r$ such that $\phi(w_r)=(U^{}_{j_{r-1}}\phi(\sigma_r)U^{-1}_{j_r})^D$, where $U^{}_{j_{r-1}}\phi(\sigma_r)U^{-1}_{j_r}\in H(n)$. If we let $w'=w_1\dots w_s$, then $w'$ is accepted by $\F_D(\A)$. To prove that $\phi(w')=\phi(w)^D$, we first note that since $w\in L_{H(n)}$, the run $u_{j_0}u_{j_1}\dots u_{j_s}$ is an accepting run of $\A_{H(n)}$ on $w$, and in particular $j_0=j_s=0$. Since $U^{}_{j_0}=U^{}_{j_s}=U_0=I$, we can rewrite $\phi(w)$ as
\[
\begin{split}
\phi(w) &= U^{^-1}_{j_0}(U^{}_{j_0}\phi(\sigma_1)U_{j_1}^{-1})(U^{}_{j_1}\phi(\sigma_2)U_{j_2}^{-1})\cdots (U^{}_{j_{s-1}}\phi(\sigma_s)U_{j_s}^{-1})U^{}_{j_s}\\
&= (U^{}_{j_0}\phi(\sigma_1)U_{j_1}^{-1})(U^{}_{j_1}\phi(\sigma_2)U_{j_2}^{-1})\cdots (U^{}_{j_{s-1}}\phi(\sigma_s)U_{j_s}^{-1})\ \ (\text{here we used that } U^{}_{j_0}=U^{}_{j_s}=I).
\end{split}
\]
Recall that for each $r=1,\dots,s$, we have $\phi(w_r)=(U^{}_{j_{r-1}}\phi(\sigma_r)U^{-1}_{j_r})^D$. Therefore,
\[
\begin{split}
\phi(w)^D &= (U^{}_{j_0}\phi(\sigma_1)U_{j_1}^{-1})^D(U^{}_{j_1}\phi(\sigma_2)U_{j_2}^{-1})^D\cdots (U^{}_{j_{s-1}}\phi(\sigma_s)U_{j_s}^{-1})^D\\
&= \phi(w_1)\phi(w_2) \cdots \phi(w_s) = \phi(w').
\end{split}
\]
This proves the first property of $\F_D(\A)$.

To prove the second property of $\F_D(\A)$, take any $w'\in L(\F_D(\A))$ and consider an accepting run of $\F_D(\A)$ on $w'$. This run passes through some states of the form $(q_i,u_j)$, that are present in both $\F_D(\A)$ and $\A'$, and some new states that exist only in $\F_D(\A)$. Let $(q_{i_0},u_{j_0}),(q_{i_1},u_{j_1}),\dots,(q_{i_s},u_{j_s})$ be the subsequence of the states of the first type which appear in the accepting run of $\F_D(\A)$. They naturally divide $w'$ into subwords $w'=w_1w_2\dots w_s$, where $w_r$ is a label of the path from $(q_{i_{r-1}},u_{j_{r-1}})$ to $(q_{i_r},u_{j_r})$ for $r=1,\dots,s$. By construction of $\F_D(\A)$, for each $r=1,\dots,s$, there exists a symbol $\sigma_r\in \Sigma$ for which there is a transition $(q_{i_{r-1}},u_{j_{r-1}})\xrightarrow{\sigma_r} (q_{i_r},u_{j_r})$ in $\A'$ and, moreover, $U^{}_{j_{r-1}}\phi(\sigma_r)U^{-1}_{j_r}\in H(n)$ and $\phi(w_r) = (U^{}_{j_{r-1}}\phi(\sigma_r)U^{-1}_{j_r})^D$.

Let $w=\sigma_1\sigma_2\dots \sigma_s$, then $q_{i_0}q_{i_1}\dots q_{i_s}$ will be an accepting run of $\A$ on $w$ and $u_{j_0}u_{j_1}\dots u_{j_s}$ will be an accepting run of $\A_{H(n)}$ on $w$. Thus $w\in L(\A)\cap L_{H(n)}$. Furthermore, we have $u_{j_0}=u_{j_s}=u_0$ and hence $U_{j_0}=U_{j_s}=I$. So we can rewrite $\phi(w)$ as
\[
\begin{split}
\phi(w) &= U^{^-1}_{j_0}(U^{}_{j_0}\phi(\sigma_1)U_{j_1}^{-1})(U^{}_{j_1}\phi(\sigma_2)U_{j_2}^{-1})\cdots (U^{}_{j_{s-1}}\phi(\sigma_s)U_{j_s}^{-1})U^{}_{j_s}\\
&= (U^{}_{j_0}\phi(\sigma_1)U_{j_1}^{-1})(U^{}_{j_1}\phi(\sigma_2)U_{j_2}^{-1})\cdots (U^{}_{j_{s-1}}\phi(\sigma_s)U_{j_s}^{-1}).
\end{split}
\]
From this we obtain the following equalities
\[
\begin{split}
\phi(w)^D &= (U^{}_{j_0}\phi(\sigma_1)U_{j_1}^{-1})^D(U^{}_{j_1}\phi(\sigma_2)U_{j_2}^{-1})^D\cdots (U^{}_{j_{s-1}}\phi(\sigma_s)U_{j_s}^{-1})^D\\ &= \phi(w_1)\phi(w_2) \cdots \phi(w_s) = \phi(w').
\end{split}
\]
This proves the second property of $\F_D(\A)$.

\begin{proposition}\label{prop:1diag}
Let $D$ be a diagonal matrix in the Smith normal form and let $\s_1$ and $\s_2$ be two regular subsets of $\GL$. Then it is decidable whether there exist matrices $A_1\in \s_1$ and $A_2\in \s_2$ such that $A_1DA_2=D$.
\end{proposition}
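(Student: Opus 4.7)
The plan is to rewrite the equation $A_1DA_2=D$ as a single condition on $A_1$. Since $D$ is nonsingular, the equation is equivalent to $A_2=D^{-1}A_1^{-1}D=(A_1^{-1})^D$. By Proposition~\ref{prop:conj}, the conjugate $(A_1^{-1})^D$ has integer entries (and hence lies in $\GL$) if and only if $A_1^{-1}\in H(n)$, in which case $(A_1^{-1})^D\in F(n)\subseteq \GL$. Therefore there exist $A_1\in\s_1$ and $A_2\in\s_2$ with $A_1DA_2=D$ if and only if the regular subset
\[
\T \;=\; \bigl\{\, (A_1^{-1})^D \;:\; A_1\in\s_1 \text{ and } A_1^{-1}\in H(n)\,\bigr\}
\]
of $\GL$ has nonempty intersection with $\s_2$.

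Next I would assemble an automaton for $\T$ using the tools already developed. Starting from a finite automaton $\A_1$ with $\phi(L(\A_1))=\s_1$, first apply the $\mathrm{Inv}$ construction to obtain $\mathrm{Inv}(\A_1)$, whose image under $\phi$ is $\s_1^{-1}$. Then take the product of $\mathrm{Inv}(\A_1)$ with the automaton $\A_{H(n)}$ from Lemma~\ref{lem:regsub}; the resulting automaton $\A$ accepts exactly those words representing matrices in $\s_1^{-1}\cap H(n)$. Finally, feed $\A$ into the $\F_D$ construction to obtain $\F_D(\A)$, which by its defining properties satisfies $\phi(L(\F_D(\A)))=\T$.

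It now remains to decide whether $\s_2\cap\T\neq\emptyset$, where both sets are regular subsets of $\GL$ with known automata $\A_2$ and $\F_D(\A)$. To reduce this to a language-emptiness question, I would apply the $\mathrm{Can}$ construction of Section~\ref{GL} to both automata so that each accepts only the canonical words representing its image. Since canonical representatives are unique (Corollary~\ref{cor:can}), we have $\s_2\cap\T\neq\emptyset$ if and only if $L(\mathrm{Can}(\A_2))\cap L(\mathrm{Can}(\F_D(\A)))\neq\emptyset$, which is decidable by standard product-automaton and emptiness-testing techniques on regular languages.

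I do not expect any single step to be a serious obstacle, since all the heavy lifting has already been packaged into the $\mathrm{Inv}$, $\F_D$, and $\mathrm{Can}$ constructions together with the coset automaton $\A_{H(n)}$. The only point requiring care is the justification that $(A_1^{-1})^D$ really does range over exactly the set $\T$ as $A_1$ varies in $\s_1\cap (\s_1\cap H(n)^{-1})$; this is a direct consequence of Proposition~\ref{prop:conj} and the definition of $\F_D$, both of which guarantee that the integrality condition $A_1^{-1}\in H(n)$ is exactly the condition for the conjugation to stay inside $\GL$.
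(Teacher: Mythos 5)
Your proposal is correct and is essentially the paper's own argument, merely with the inverse placed on $A_1$ rather than $A_2$: the paper rewrites the equation as $A_2^{-1}=A_1^D$ and checks $L(\mathrm{Can}(\F_D(\A_1)))\cap L(\mathrm{Can}(\mathrm{Inv}(\A_2)))\neq\emptyset$, whereas you rewrite it as $A_2=(A_1^{-1})^D$ and check $L(\mathrm{Can}(\A_2))\cap L(\mathrm{Can}(\F_D(\mathrm{Inv}(\A_1))))\neq\emptyset$. Your extra explicit product with $\A_{H(n)}$ before applying $\F_D$ is harmless but redundant, since the $\F_D$ construction already incorporates that intersection internally.
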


\begin{proof}
Let $\A_1$ and $\A_2$ be finite automata such that $\s_1=L(\A_1)$ and $\s_2=L(\A_2)$, respectively. We will show that the equation $A_1DA_2=D$ has a solution for some $A_1\in \s_1$ and $A_2\in \s_2$ if and only if $L(\mathrm{Can}(\F_D(\A_1)))\cap L(\mathrm{Can}(\mathrm{Inv}(\A_2))) \neq \emptyset$\footnote{We remind that the construction of the automaton $\mathrm{Can}(\A)$ is described in Section \ref{Can} of the Appendix.}.

First, suppose there exist matrices $A_1\in \s_1$ and $A_2\in \s_2$ such that $A_1DA_2=D$. Let $w_1\in L(\A_1)$ and $w_2\in L(\A_2)$ be such that $\phi(w_1)=A_1$ and $\phi(w_2)=A_2$, respectively. Also let $D=\begin{bmatrix} m & 0\\ 0 & mn\end{bmatrix}$ for some $m,n\neq 0$. We can rewrite the equation $A_1DA_2=D$ as $A_2^{-1}=A_1^D$. From this we can see that the matrix $A_1^D$ must have integer coefficients. Hence, by Proposition \ref{prop:conj}, $A_1\in H(n)$ and $w_1\in L_{H(n)}$. Since $w_1\in L(\A_1)\cap L_{H(n)}$, there exists $w'_1\in L(\F_D(\A_1))$ such that $\phi(w'_1)=\phi(w_1)^D=A_1^D$. Also there is $w'_2\in L(\mathrm{Inv}(\A_2))$ such that $\phi(w'_2)=\phi(w_2)^{-1}=A_2^{-1}$. Since $A_2^{-1}=A_1^D$, we have $\phi(w'_1)=\phi(w'_2)$. In other words, $w'_1$ and $w'_2$ are equivalent. Let $w$ be a canonical word such that $w\sim w'_1\sim w'_2$, then $w\in L(\mathrm{Can}(\F_D(\A_1)))\cap L(\mathrm{Can}(\mathrm{Inv}(\A_2)))$.

Now suppose there is a word $w$ that belongs to $L(\mathrm{Can}(\F_D(\A_1)))\cap L(\mathrm{Can}(\mathrm{Inv}(\A_2)))$. Hence there are words $w'_1$ and $w'_2$ such that $w\sim w'_1\sim w'_2$ and $w'_1\in L(\F_D(\A_1))$ and $w'_2\in L(\mathrm{Inv}(\A_2))$. Therefore, there exists $w_1\in L(\A_1)\cap L_{H(n)}$ such that $\phi(w_1)^D=\phi(w'_1)$. Also there exists $w_2\in L(\A_2)$ such that $\phi(w_2)^{-1} = \phi(w'_2)$. Let $A_1=\phi(w_1)$ and $A_2=\phi(w_2)$. Then we have $A_1^D=\phi(w_1)^D=\phi(w'_1)=\phi(w'_2)=\phi(w_2)^{-1}=A_2^{-1}$, which is equivalent to $A_1DA_2=D$. Moreover, since $w_1\in L(\A_1)$ and $w_2\in L(\A_2)$, we have that $A_1\in \s_1$ and $A_2\in \s_2$.

The proof of the proposition now follows from the facts that the intersection of two regular languages is regular and that the emptiness problem for regular languages is decidable.
\end{proof}

\begin{corollary}\label{cor:1diag}
Let $M_1$ and $M_2$ be nonsingular matrices from $\Z^{2\times 2}$ and let $\s_1$ and $\s_2$ be regular subsets of $\GL$. Then it is decidable whether there exist matrices $A_1\in \s_1$ and $A_2\in \s_2$ such that $A_1M_1A_2=M_2$.
\end{corollary}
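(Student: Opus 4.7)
My plan is to reduce to Proposition \ref{prop:1diag} via the Smith normal form of $M_1$ and $M_2$. First, using Theorem \ref{SNF}, I will compute decompositions $M_1 = E_1 D_1 F_1$ and $M_2 = E_2 D_2 F_2$ with $E_i, F_i \in \GL$ and $D_i$ in Smith normal form. After absorbing signs into $E_i$ by right-multiplication by $-I$ or $N$, I may assume $D_1$ and $D_2$ have non-negative diagonal entries and are therefore uniquely determined by $M_1$ and $M_2$.

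Next, I observe that any solution $A_1 M_1 A_2 = M_2$ exhibits a second Smith normal form decomposition $M_2 = (A_1 E_1) D_1 (F_1 A_2)$, where both outer factors lie in $\GL$. By uniqueness of the Smith normal form this forces $D_1 = D_2$, so if this equality fails the algorithm outputs ``no''. Otherwise, write $D = D_1 = D_2$ and rewrite the target equation as
\[
(E_2^{-1} A_1 E_1)\, D\, (F_1 A_2 F_2^{-1}) = D.
\]
Setting $B_1 = E_2^{-1} A_1 E_1$ and $B_2 = F_1 A_2 F_2^{-1}$, the problem becomes: do there exist $B_1 \in \s_1'$ and $B_2 \in \s_2'$ with $B_1 D B_2 = D$, where $\s_1' := E_2^{-1} \s_1 E_1$ and $\s_2' := F_1 \s_2 F_2^{-1}$? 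This is precisely the instance handled by Proposition \ref{prop:1diag}.

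To close the reduction I will verify that $\s_1'$ and $\s_2'$ are regular subsets of $\GL$. If $\s_i = \phi(L_i)$ for a regular $L_i \subseteq \Sigma^*$, then by surjectivity of $\phi$ I can pick words $u, v, u', v' \in \Sigma^*$ representing $E_2^{-1}, E_1, F_1, F_2^{-1}$ respectively, yielding $\s_1' = \phi(u L_1 v)$ and $\s_2' = \phi(u' L_2 v')$; both are regular since regular languages are closed under concatenation with fixed words. Applying Proposition \ref{prop:1diag} to $\s_1', \s_2', D$ then decides the original question.

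The hard work has already been carried out in Proposition \ref{prop:1diag}, so there is no real obstacle in the reduction itself. The only substantive ingredient is uniqueness of the Smith normal form (Theorem \ref{SNF}), which simultaneously certifies infeasibility when $D_1 \neq D_2$ and makes the conjugation bookkeeping above transparent when $D_1 = D_2$.
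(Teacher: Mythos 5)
Your proposal is correct and follows essentially the same route as the paper: decompose $M_1,M_2$ into Smith normal form, use uniqueness to rule out the case $D_1\neq D_2$, conjugate the equation into the form $B_1 D B_2 = D$ with translated regular subsets $\s_1' = E_2^{-1}\s_1 E_1$ and $\s_2' = F_1\s_2 F_2^{-1}$, and invoke Proposition \ref{prop:1diag}. The only difference is that you spell out slightly more explicitly why $\s_1'$ and $\s_2'$ remain regular (by concatenating with fixed words), which the paper asserts in one line; the substance is identical.
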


\begin{proof}
Let $D_1$ and $D_2$ be the Smith normal forms of $M_1$ and $M_2$, respectively, that is, $M_1=E_1D_1F_1$ and $M_2=E_2D_2F_2$ for some $E_1,F_1,E_2,F_2\in \GL$. Without loss of generality, we can assume that $D_1$ and $D_2$ have strictly positive diagonal coefficients. Note that if the equation $A_1M_1A_2=M_2$ has a solution for some $A_1,A_2\in \GL$, then, by Theorem \ref{SNF}, $M_1$ and $M_2$ must have the same Smith normal form. Therefore, if $D_1\neq D_2$, then the equation does not have a solution.

So suppose that $D=D_1=D_2$ is the Smith normal form of $M_1$ and $M_2$. Then $A_1M_1A_2=M_2$ is equivalent to $A_1(E_1DF_1)A_2=E_2DF_2$, which we can rewrite as $(E_2^{-1}A_1E_1)D(F_1A_2F_2^{-1})=D$. Let $\s'_1=\{E_2^{-1}AE_1 : A\in \s_1\}$ and $\s'_2=\{F_1AF_2^{-1} : A\in \s_2\}$. Then $\s'_1$ and $\s'_2$ are regular subsets of $\GL$ because $E_1,F_1,E_2$, and $F_2$ are some fixed matrices. Now it is not hard to see that the equation $A_1M_1A_2=M_2$ has a solution $A_1,A_2$ such that $A_1\in \s_1$ and $A_1\in \s_2$ if and only if the equation $A'_1DA'_2=D$ has a solution $A'_1,A'_2$ such that $A'_1\in \s'_1$ and $A'_2\in \s'_2$. By Proposition \ref{prop:1diag}, this problem is decidable.
\end{proof}

\subsection{General case: $A_1M_1\dots A_{t-1}M_{t-1}A_t= M_t$}\label{Gen}

To prove an analog of Corollary \ref{cor:1diag} in the general case, we will extend the construction of the automaton $\F_D(\A)$ to  build an automaton $\F(\A_1,\dots,\A_{t-1},M_1,\dots,M_{t-1};M_t)$ (where $\A_1,\dots,\A_{t-1}$ are finite automata in alphabet $\Sigma$ and $M_1,\dots,M_{t-1},M_t$ are nonsingular matrices from $\Z^{2\times 2}$) which will have the following properties:
\begin{enumerate}[(1)]
\item If $w_1\in L(\A_1),\dots,w_{t-1}\in L(\A_{t-1})$ and there is a matrix $A\in \GL$ which satisfies the equation $\phi(w_1)M_1\dots\phi(w_{t-1})M_{t-1}A=M_t$, then there is $w\in L(\F(\A_1,\dots,\A_{t-1},M_1,\dots,M_{t-1};M_t))$ such that $\phi(w_1)M_1\dots\phi(w_{t-1})M_{t-1}\phi(w)^{-1}=M_t$ (and hence $A=\phi(w)^{-1}$).

\item If $w\in L(\F(\A_1,\dots,\A_{t-1},M_1,\dots,M_{t-1};M_t))$, then there are $w_1\in L(\A_1),\dots,w_{t-1}\in L(\A_{t-1})$ such that $\phi(w_1)M_1\dots\phi(w_{t-1})M_{t-1}\phi(w)^{-1}=M_t$.
\end{enumerate}

{\bf Construction of $\F(\A_1,\dots,\A_{t-1},M_1,\dots,M_{t-1};M_t)$.}
The construction will be done by induction on~$t$. We will use the following notations: If $\A_1$ and $\A_2$ are finite automata in alphabet $\Sigma$, then $\A_1\cdot \A_2$ denotes the concatenation of $\A_1$ and $\A_2$. If $\A$ is an automaton and $w\in \Sigma^*$, then $\A\cdot w$ denotes an automaton that recognizes the language $L(\A)\cdot \{w\} = \{uw: u\in L(\A)\}$. Similarly, $w\cdot \A$ is an automaton that recognizes $\{w\}\cdot L(\A) = \{wu: u\in L(\A)\}$.

First, we construct an automaton $\F(\A_1,M_1;M_2)$, which will serve as a base for induction. Let $D_1$ and $D_2$ be  diagonal matrices with nonnegative coefficients which are equal to the Smith normal forms of $M_1$ and $M_2$, respectively. If $D_1\neq D_2$, then define $\F(\A_1,M_1;M_2)$ to be an automaton that accepts the empty language. Otherwise, let $D=D_1=D_2$ be the common Smith normal form of $M_1$ and $M_2$, and suppose $M_1=E_1DF_1$ and $M_2=E_2DF_2$ for some matrices $E_1,F_1,E_2,F_2\in \GL$. Let $w(E_1)$, $w(F_1)$, $w(E_2^{-1})$ and $w(F_2^{-1})$ be canonical words that represent the matrices $E_1$, $F_1$, $E_2^{-1}$ and $F_2^{-1}$, respectively, and define $\F(\A_1,M_1;M_2)$ to be the following automaton
\[
\F(\A_1,M_1;M_2) = w(F_2^{-1})\cdot \F_D\big(w(E_2^{-1})\cdot \A_1\cdot w(E_1)\big)\cdot w(F_1).
\]
The following proposition states that the automaton $\F(\A_1,M_1;M_2)$ indeed satisfies the desired properties.

\begin{proposition} \label{prop:aut1}
Let $\A_1$ be a finite automaton in alphabet $\Sigma$, and let $M_1$ and $M_2$ be nonsingular matrices from $\Z^{2\times 2}$. Then the automaton $\F(\A_1,M_1;M_2)$ has the following properties:
\begin{enumerate}[(1)]
\item If $w_1\in L(\A_1)$ and there is a matrix $A\in \GL$ which satisfies the equation $\phi(w_1)M_1A=M_2$, then there is $w\in L(\F(\A_1,M_1;M_2))$ such that $\phi(w_1)M_1\phi(w)^{-1}=M_2$ (and hence $A=\phi(w)^{-1}$).

\item If $w\in L(\F(\A_1,M_1;M_2))$, then there is $w_1\in L(\A_1)$ such that $\phi(w_1)M_1\phi(w)^{-1}=M_2$.
\end{enumerate}
\end{proposition}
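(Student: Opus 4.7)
The plan is to reduce the equation $\phi(w_1) M_1 A = M_2$ to a conjugation equation of the form $BDC = D$ already handled by the automaton $\F_D$ combined with Proposition \ref{prop:conj}. First I would dispose of the degenerate case: if $D_1\neq D_2$, then $\F(\A_1,M_1;M_2)$ is defined to accept the empty language, so property~(2) is vacuous; and property~(1) is vacuous too, since if $\phi(w_1)M_1A = M_2$ with $A\in \GL$, then $M_1$ and $M_2$ have the same Smith normal form by Theorem~\ref{SNF} (left/right multiplication by elements of $\GL$ preserves the Smith normal form), contradicting $D_1\neq D_2$.

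Assume therefore $D = D_1 = D_2 = \begin{bmatrix} m & 0\\ 0 & mn\end{bmatrix}$ and fix the factorisations $M_i = E_i D F_i$. The key substitution is to set $B := E_2^{-1}\phi(w_1)E_1$ and $C := F_1 A F_2^{-1}$, which rewrites $\phi(w_1)M_1A = M_2$ as $BDC = D$, i.e., as $C = D^{-1}B^{-1}D = (B^{-1})^D$. By Proposition~\ref{prop:conj}, such a $C\in \GL$ exists if and only if $B\in H(n)$, and in that case $C$ is uniquely determined by $B$, whence $A = F_1^{-1}(B^{-1})^D F_2$ is uniquely determined by $w_1$.

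For property~(1), given $w_1\in L(\A_1)$ and $A$ as in the hypothesis, set $w'_1 = w(E_2^{-1})\cdot w_1\cdot w(E_1)$, so that $w'_1 \in L\bigl(w(E_2^{-1})\cdot \A_1\cdot w(E_1)\bigr)$ and $\phi(w'_1) = B\in H(n)$. Property~(1) of $\F_D$ then yields $w''\in L\bigl(\F_D(w(E_2^{-1})\cdot \A_1\cdot w(E_1))\bigr)$ with $\phi(w'') = B^D$. Taking $w = w(F_2^{-1})\cdot w''\cdot w(F_1)$ gives $w\in L(\F(\A_1,M_1;M_2))$ with $\phi(w) = F_2^{-1}B^D F_1$, and a direct computation using $B^{-1} = E_1^{-1}\phi(w_1)^{-1}E_2$ yields
\[
\phi(w_1)M_1\phi(w)^{-1} = \phi(w_1)E_1 D\cdot (B^{-1})^D\cdot F_2 = \phi(w_1)E_1 B^{-1} D F_2 = E_2 D F_2 = M_2.
\]

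For property~(2) I would run the argument in reverse: any $w\in L(\F(\A_1,M_1;M_2))$ factors as $w(F_2^{-1})\cdot w''\cdot w(F_1)$ with $w''$ accepted by $\F_D(w(E_2^{-1})\cdot \A_1\cdot w(E_1))$, and property~(2) of $\F_D$ produces a word $w'_1 = w(E_2^{-1})\cdot w_1\cdot w(E_1)$ with $w_1\in L(\A_1)$ such that $\phi(w'_1)=B\in H(n)$ and $\phi(w'')=B^D$; the same identity as above then gives $\phi(w_1)M_1\phi(w)^{-1} = M_2$. The only genuinely non-formal step is the matrix identity in property~(1), which is forced by the Smith-form factorisation combined with the defining relation $(B^{-1})^D = D^{-1} B^{-1} D$; the remainder is bookkeeping between the nested automata, made transparent by the observation that the outer conjugators $E_i, F_i$ enter only as fixed prepended and appended canonical words, while the $H(n)$-constraint and the $B\mapsto B^D$ rewriting are handled entirely inside $\F_D$.
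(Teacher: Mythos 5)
Your proof is correct and takes essentially the same route as the paper: reduce to the case of a common Smith normal form $D$, substitute $B=E_2^{-1}\phi(w_1)E_1$, observe that the $H(n)$-membership of $B$ (forced by the requirement that the conjugate be integral, via Proposition~\ref{prop:conj}) makes the $\F_D$ construction applicable, and unwind the factorisation by the fixed canonical words $w(F_2^{-1})$, $w(E_2^{-1})$, $w(E_1)$, $w(F_1)$. The only cosmetic difference is that you derive $B\in H(n)$ from $C=F_1AF_2^{-1}=(B^{-1})^D\in\GL$, while the paper inverts the equation to $A^{-1}=F_2^{-1}B^D F_1$ and reads off $B^D\in\GL$ directly; both are equivalent since $H(n)$ is a group.
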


\begin{proof}
Note that if $M_1$ and $M_2$ have different Smith normal forms, then by the uniqueness part of Theorem~\ref{SNF} the equation $A_1M_1A_2=M_2$ cannot have a solution $A_1,A_2\in \GL$. Therefore, in this case both properties of $\F(\A_1,M_1;M_2)$ are trivially satisfied. Now suppose that $D=\begin{bmatrix} m & 0\\ 0 & mn \end{bmatrix}$ is the common Smith normal form of $M_1$ and $M_2$ and let $E_1,F_1,E_2,F_2$ be matrices form $\GL$ such that $M_1=E_1DF_1$ and $M_2=E_2DF_2$.

To see that the first property of $\F(\A_1,M_1;M_2)$ holds, let's take any $w_1\in L(\A_1)$ for which there is a matrix $A\in \GL$ that satisfies the equation $\phi(w_1)M_1A=M_2$. Hence we have that $\phi(w_1)E_1DF_1A=E_2DF_2$, which is equivalent to $F_2^{-1}(E_2^{-1}\phi(w_1)E_1)^D F_1=A^{-1}$. Because $F_2^{-1}$, $F_1$, and $A^{-1}$ are matrices from $\GL$, we conclude that $(E_2^{-1}\phi(w_1)E_1)^D$ is in $\GL$. Then, by Proposition \ref{prop:conj}, we have $E_2^{-1}\phi(w_1)E_1\in H(n)$ or, equivalently, $w(E_2^{-1})\cdot w_1 \cdot w(E_1)\in L_{H(n)}$. By the first property of the construction $\F_D$, there exists $w'\in L\big(\F_D\big(w(E_2^{-1})\cdot \A_1\cdot w(E_1)\big)\big)$ such that $\phi(w')=\phi\big(w(E_2^{-1})\cdot w_1 \cdot w(E_1)\big)^D = (E_2^{-1}\phi(w_1) E_1)^D$. Let $w=w(F_2^{-1})\cdot w'\cdot w(F_1)$. Then $w$ is in $L(\F(\A_1,M_1;M_2))$. Moreover, $\phi(w)= F_2^{-1}\phi(w') F_1 = F_2^{-1}(E_2^{-1}\phi(w_1) E_1)^D F_1$. The last equation is equivalent to $\phi(w_1)E_1DF_1 \phi(w)^{-1}= E_2DF_2$, which is the same as $\phi(w_1)M_1 \phi(w)^{-1}= M_2$. Hence the first property holds.

Now we prove the second property of $\F(\A_1,M_1;M_2)$. Let's take any $w\in L(\F(\A_1,M_1;M_2))$. Then there exists $w'\in L\big(\F_D\big(w(E_2^{-1})\cdot \A_1\cdot w(E_1)\big)\big)$ such that $w=w(F_2^{-1})\cdot w'\cdot w(F_1)$. By the second property of the construction $\F_D$, there exists $w_1\in L(\A_1)$ such that $w(E_2^{-1})\cdot w_1\cdot w(E_1)\in L_{H(n)}$ and $\phi(w') = \phi\big(w(E_2^{-1})\cdot w_1\cdot w(E_1)\big)^D$. The last two conditions are equivalent to the facts that $E_2^{-1}\phi(w_1)E_1\in H(n)$ and $\phi(w') = (E_2^{-1}\phi(w_1)E_1)^D$. From the equation $w=w(F_2^{-1})\cdot w'\cdot w(F_1)$ we have that $\phi(w)=F_2^{-1}\phi(w')F_1$. Therefore, $\phi(w)=F_2^{-1}(E_2^{-1}\phi(w_1)E_1)^DF_1$. The last equation is equivalent to $\phi(w_1)E_1DF_1 \phi(w)^{-1}= E_2DF_2$, which is the same as $\phi(w_1)M_1 \phi(w)^{-1}= M_2$. This proves the second property.

\end{proof}

We now explain how to construct an automaton $\F(\A_1,\dots,\A_{t-1},M_1,\dots,M_{t-1};M_t)$. For convenience the description of this construction is enclosed in the following proposition.

\begin{proposition}\label{prop:autom}
Let $\A_1,\dots,\A_{t-1}$ be finite automata in alphabet $\Sigma$, and let $M_1,\dots,M_{t-1},M_t$ be nonsingular matrices from $\Z^{2\times 2}$. Then there is an automaton $\F(\A_1,\dots,\A_{t-1},M_1,\dots,M_{t-1};M_t)$ which has the following properties:
\begin{enumerate}[(1)]
\item If $w_1\in L(\A_1),\dots,w_{t-1}\in L(\A_{t-1})$ and there is a matrix $A\in \GL$ which satisfies the equation $\phi(w_1)M_1\dots\phi(w_{t-1})M_{t-1}A=M_t$, then there is $w\in L(\F(\A_1,\dots,\A_{t-1},M_1,\dots,M_{t-1};M_t))$ such that $\phi(w_1)M_1\dots\phi(w_{t-1})M_{t-1}\phi(w)^{-1}=M_t$ (and hence $A=\phi(w)^{-1}$).

\item If $w\in L(\F(\A_1,\dots,\A_{t-1},M_1,\dots,M_{t-1};M_t))$, then there are $w_1\in L(\A_1),\dots,w_{t-1}\in L(\A_{t-1})$ such that $\phi(w_1)M_1\dots\phi(w_{t-1})M_{t-1}\phi(w)^{-1}=M_t$.
\end{enumerate}
\end{proposition}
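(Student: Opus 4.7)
The plan is to construct $\F(\A_1,\ldots,\A_{t-1},M_1,\ldots,M_{t-1};M_t)$ by induction on $t$, with the base case $t=2$ supplied directly by Proposition~\ref{prop:aut1}. For the inductive step, the strategy is to reduce to the diagonal case via the Smith normal form trick of Proposition~\ref{prop:aut1} and then peel off the trailing diagonal using the $\F_D$-construction of Proposition~\ref{prop:1diag}.

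The first move is the Smith normal form reduction. Factoring $M_i=E_iD_iF_i$ for each $i$ and absorbing the $E_i^{\pm1},F_i^{\pm1}$ into new automata $\hat{\A}_1:=w(E_t^{-1})\cdot\A_1\cdot w(E_1)$, $\hat{\A}_i:=w(F_{i-1})\cdot\A_i\cdot w(E_i)$ for $2\le i\le t-1$, together with the adjusted free variable $\hat A:=F_{t-1}AF_t^{-1}\in\GL$, the original equation transforms into its diagonal analog
\[
\phi(\hat w_1)D_1\phi(\hat w_2)D_2\cdots\phi(\hat w_{t-1})D_{t-1}\hat A=D_t.
\]
Any automaton solving this diagonal problem will be converted back to one for the original problem by prepending $w(F_t^{-1})$, appending $w(F_{t-1})$, and applying $\mathrm{Can}$.

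For the diagonal equation, peel off the trailing pair $(\hat{\A}_{t-1},D_{t-1})$ using the identity $\phi(\hat w_{t-1})D_{t-1}=D_{t-1}\cdot\phi(\hat w_{t-1})^{D_{t-1}}$. By Proposition~\ref{prop:conj}, the conjugate $\phi(\hat w_{t-1})^{D_{t-1}}$ lies in $\GL$ precisely when $\phi(\hat w_{t-1})\in H(n_{t-1})$, a regular constraint by Lemma~\ref{lem:regsub}. Setting $A':=\phi(\hat w_{t-1})^{D_{t-1}}\hat A\in\GL$ and merging the trailing diagonals into $D'_{t-2}:=D_{t-2}D_{t-1}$ (still in Smith normal form, since $m_{t-2}m_{t-1}$ divides $m_{t-2}n_{t-2}m_{t-1}n_{t-1}$), the equation reduces to
\[
\phi(\hat w_1)D_1\cdots\phi(\hat w_{t-2})D'_{t-2}A'=D_t,
\]
a problem with only $t-2$ intermediate matrices to which the inductive hypothesis applies. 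The inductive hypothesis yields an automaton $\F'$ recognising canonical words representing $(A')^{-1}$; concatenating with the $\F_{D_{t-1}}$-construction applied to $\hat{\A}_{t-1}$ (which by Proposition~\ref{prop:1diag} recognises $\phi(\hat w_{t-1})^{D_{t-1}}$ for $\hat w_{t-1}\in L(\hat{\A}_{t-1})\cap L_{H(n_{t-1})}$), canonicalising, and undoing the outer Smith-form reduction produces the desired $\F$, since $\hat A^{-1}=(A')^{-1}\phi(\hat w_{t-1})^{D_{t-1}}$.

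Properties (1) and (2) are verified by unwinding the construction layer by layer, using the inductive hypothesis, Propositions~\ref{prop:conj} and~\ref{prop:1diag}, and the closure of regular languages under intersection, concatenation, and the operations $\F_D,\mathrm{Inv},\mathrm{Can}$. The main obstacle is ensuring that every valid tuple $(w_1,\ldots,w_{t-1},A)$ for the original equation gives rise, after Smith-form reduction, to a $\hat w_{t-1}$ satisfying $\phi(\hat w_{t-1})\in H(n_{t-1})$; this follows from rearranging the diagonal equation to isolate the trailing factor $D_{t-1}^{-1}\phi(\hat w_{t-1})^{-1}$ acting on an integer matrix and invoking Proposition~\ref{prop:conj}, in the same spirit as the base-case argument in Proposition~\ref{prop:aut1}. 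Careful bookkeeping of the cascade of divisibility and conjugation conditions at each recursive layer, aided by the uniqueness of Smith normal form (Theorem~\ref{SNF}), completes the verification.
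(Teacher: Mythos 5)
There is a genuine gap in the inductive step: the peeling-off move requires the conjugate $\phi(\hat w_{t-1})^{D_{t-1}}$ to lie in $\GL$, i.e.\ $\phi(\hat w_{t-1})\in H(n_{t-1})$, and you assert at the end that every valid solution tuple necessarily satisfies this. That is only true for the base case $t=2$ (where Proposition~\ref{prop:aut1} can literally sandwich $(E_2^{-1}\phi(w_1)E_1)^D$ between two $\GL$-matrices and force it to be integral); for $t\geq 3$ it fails. Concretely, take $t=3$, $D_1=D_2=\left[\begin{smallmatrix}1&0\\0&2\end{smallmatrix}\right]$, $D_3=\left[\begin{smallmatrix}1&0\\0&4\end{smallmatrix}\right]$, $\phi(\hat w_1)=\left[\begin{smallmatrix}1&0\\-2&1\end{smallmatrix}\right]$, $\phi(\hat w_2)=\left[\begin{smallmatrix}1&0\\1&1\end{smallmatrix}\right]$, $\hat A=I$. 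Then $\phi(\hat w_1)D_1\phi(\hat w_2)D_2\hat A=D_3$ holds, yet $\phi(\hat w_2)\notin H(2)$ since its lower-left entry is odd. Your construction intersects the trailing automaton with $L_{H(n_{t-1})}$ before applying $\F_{D_{t-1}}$, so this solution is silently dropped: property~(1) of the proposition would fail for the automaton you build.

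The ``rearranging to isolate $D_{t-1}^{-1}\phi(\hat w_{t-1})^{-1}$'' argument you sketch does not rescue this, because when $t\geq 3$ the quantity isolated involves products with other $D_i^{-1}$, which are not in $\GL$, so nothing forces the conjugate by $D_{t-1}$ to be integral. The missing ingredient is precisely Lemma~\ref{lem:fin}: instead of factoring $\phi(\hat w_{t-1})D_{t-1}=D_{t-1}\,\phi(\hat w_{t-1})^{D_{t-1}}$ (which needs the membership in $H(n_{t-1})$), the paper observes that the partial product $\phi(w_{t-1})M_{t-1}A$ has Smith normal form $D_{t-1}$, hence equals $U_iD_{t-1}V$ for some $V\in\GL$ and some $U_i$ from a \emph{finite} list of coset representatives of $H(n_{t-1})$. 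This introduces a new free variable $V$ and splits the single equation into $\phi(w_{t-1})M_{t-1}(AV^{-1})=U_iD_{t-1}$ (handled by the base case with target $U_iD_{t-1}$) and $\phi(w_1)M_1\cdots\phi(w_{t-2})(M_{t-2}U_iD_{t-1})V=M_t$ (one fewer intermediate matrix, so the inductive hypothesis applies), and the final automaton is a finite union over $i$. Without the extra quantifier over $V$ and the finite enumeration of $U_i$, the recursion cannot be made sound.
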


The following lemma will play an important role in the proof of the inductive step in Proposition~\ref{prop:autom}. Informally speaking, it states that when we consider all possible Smith normal forms $\mathit{UDV}$ for a fixed~$D$, we can assume that $U$ comes from a finite set of matrices.

\begin{lemma} \label{lem:fin}
Let $D=\begin{bmatrix} m & 0\\ 0 & mn \end{bmatrix}$ be a diagonal matrix in the Smith normal form and let $U_0,\dots,U_k$ be representatives of the right cosets of $H(n)$ in $\GL$. Then
\[
\{ \mathit{UDV}\ :\ U,V\in \GL\} = \bigcup_{i=0}^k\ \{U_iDV\ :\ V\in \GL\}.
\]
\end{lemma}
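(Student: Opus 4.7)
The plan is to handle the two inclusions separately. The inclusion $\supseteq$ is immediate, since every $U_i$ lies in $\GL$, so each $U_iDV$ already has the form $UDV$ with $U,V\in \GL$.

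For $\subseteq$, I would start with an arbitrary $UDV$ where $U,V\in \GL$ and look for an index $i$ and a matrix $V'\in \GL$ satisfying $UDV=U_iDV'$. Solving formally for $V'$ yields
\[
V' \;=\; D^{-1}(U_i^{-1}U)D\cdot V \;=\; (U_i^{-1}U)^D\cdot V,
\]
so, since $V\in \GL$ already, the only thing to check is that $(U_i^{-1}U)^D\in \GL$. By Proposition~\ref{prop:conj} this is equivalent to $U_i^{-1}U\in H(n)$, i.e., to $U$ lying in the left coset $U_iH(n)$. Thus the whole statement reduces to showing
\[
\GL \;=\; \bigcup_{i=0}^{k} U_iH(n),
\]
that is, that the $U_i$'s collectively hit every left coset of $H(n)$.

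I expect the delicate point to be precisely this last step, because the $U_i$'s are given as representatives of the \emph{right} cosets of $H(n)$, and in a general group a right transversal need not be a left transversal. Since $H(n)$ has finite index equal to $k+1$ in $\GL$, though, I would argue that one can arrange the list $U_0,\dots,U_k$ to serve both purposes simultaneously — either directly, by inspecting the explicit matrices $W_{i,j}$ built in the proof of Theorem~\ref{thm:ind} and checking that they already exhaust the left cosets, or, if necessary, by replacing each $U_i$ by another element of its right coset until the resulting list is also a left transversal (which is possible because left and right cosets of $H(n)$ are equinumerous and finite). Once this is in hand, for any $U\in \GL$ I pick the unique $i$ with $U\in U_iH(n)$, set $V' := (U_i^{-1}U)^D V\in \GL$, and obtain the desired decomposition $UDV = U_iDV'$.
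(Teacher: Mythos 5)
Your core argument is the paper's argument: given $UDV$, pick $i$ with $U\in U_iH(n)$, use Proposition~\ref{prop:conj} to see that $V' := (U_i^{-1}U)^D V\in\GL$, and conclude $UDV = U_iDV'$; the reverse inclusion is trivial. The left-versus-right concern you raise is genuine, and in fact points to an imprecision in the lemma's hypothesis rather than a defect in your reasoning alone. The proof, exactly as the paper writes it, picks ``$i$ such that $U\in U_iH(n)$'', which is only possible when $\{U_0,\dots,U_k\}$ is a \emph{left} transversal, i.e.\ $\GL=\bigcup_i U_iH(n)$; and since $H(n)$ is not normal (for instance, conjugating $\left[\begin{smallmatrix}1&1\\0&1\end{smallmatrix}\right]\in H(n)$ by $\left[\begin{smallmatrix}0&1\\1&0\end{smallmatrix}\right]$ gives $\left[\begin{smallmatrix}1&0\\1&1\end{smallmatrix}\right]\notin H(n)$ when $n\ge 2$), a right transversal of $H(n)$ need not also be a left transversal. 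Indeed, $U_iD\,\GL = U_jD\,\GL$ precisely when $U_iH(n)=U_jH(n)$, so the union on the right-hand side of the lemma covers all of $\{UDV : U,V\in\GL\}$ exactly when the $U_i$ meet every left coset. So the hypothesis should say ``left cosets''.

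That said, the fix is entirely benign and your two proposed remedies are both heavier than needed. Theorem~\ref{thm:ind} already gives an algorithm producing representatives of both the left and the right cosets of $H(n)$, and nothing obliges the $U_i$ in this lemma to be the same tuple as the one used in Lemma~\ref{lem:regsub} to build $\A_{H(n)}$ (where right-coset representatives are genuinely required). One simply takes left-coset representatives here. Your first suggestion — inspecting the explicit $W_{i,j}$ from Theorem~\ref{thm:ind} — would not go through directly, since the proof there shows that the $W_{i,j}^{-1}$ (not the $W_{i,j}$ themselves) form a left transversal; and your second suggestion, invoking the classical existence of a common left--right transversal of a finite-index subgroup, is correct but proves more than the situation demands.
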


\begin{proof}
Consider a matrix $M=\mathit{UDV}$ for some $U,V\in \GL$ and choose $i$ such that $U\in U_iH(n)$. In this case we have that $U_i^{-1}U\in H(n)$, and thus $(U_i^{-1}U)^D$ belongs to $\GL$ by Proposition~\ref{prop:conj}. Let $V'= (U_i^{-1}U)^DV\in \GL$. Then we have an equality $M=\mathit{UDV}=U_iDV'$, and hence $M\in \{U_iDV\ :\ V\in \GL\}$. The inclusion in the other direction is obvious.

\end{proof}

\begin{proof}[Proof of Proposition \ref{prop:autom}]
The proof will be done by induction of $t$. The base case when $t=2$ follows from Proposition \ref{prop:aut1}. Now suppose the proposition holds for $t-1$, and thus we have a construction for the automata of the form $\F(\A_1,\dots,\A_{t-2},M_1,\dots,M_{t-2};M_{t-1})$ which satisfy the properties (1) and (2) above. Using these automata, we will show how to construct an automaton $\F(\A_1,\dots,\A_{t-1},M_1,\dots,M_{t-1};M_t)$.

Let $D_{t-1}=\begin{bmatrix} m & 0\\ 0 & mn \end{bmatrix}$ be equal to the Smith normal form of the matrix $M_{t-1}$ and let $U_0,\dots,U_k$ be representatives of the right cosets of $H(n)$, which can be computed by Theorem \ref{thm:ind}.
Then we define $\F(\A_1,\dots,\A_{t-1},M_1,\dots,M_{t-1};M_t)$ to be an automaton that recognizes the following union of regular languages
\[
\bigcup_{i=0}^k\ L\Big(\F(\A_1,\dots,\A_{t-3},\,\A_{t-2},\,M_1,\dots,M_{t-3},\,M_{t-2}U_iD_{t-1};\,M_t)\cdot \F(\A_{t-1},M_{t-1};U_iD_{t-1})\Big).
\]

To see that the first property holds for $\F(\A_1,\dots,\A_{t-1},M_1,\dots,M_{t-1};M_t)$, let's take $w_1\in L(\A_1),\dots,w_{t-1}\in L(\A_{t-1})$, and suppose there is a matrix $A\in \GL$ which satisfies the equation
\[
\phi(w_1)M_1\dots\phi(w_{t-1})M_{t-1}A=M_t.
\]
By Lemma \ref{lem:fin}, there is $i\in \{0,\dots,k\}$ and $V\in \GL$ such that $\phi(w_{t-1})M_{t-1}A=U_iD_{t-1}V$. So the above equation is equivalent to the following system of equations
\[
\begin{split}
\phi(w_1)M_1\dots\phi(w_{t-2})M_{t-2}U_iD_{t-1}V &=M_t,\\
\phi(w_{t-1})M_{t-1}AV^{-1} &=U_iD_{t-1}.
\end{split}
\]
Since $V\in \GL$, by the inductive hypothesis there is a word $u$ such that
\[
u\in L\Big(\F(\A_1,\dots,\A_{t-3},\,\A_{t-2},\,M_1,\dots,M_{t-3},\,M_{t-2}U_iD_{t-1};\,M_t)\Big)
\]
and
\[
\phi(w_1)M_1\dots\phi(w_{t-2})M_{t-2}U_iD_{t-1}\phi(u)^{-1} =M_t.
\]
Moreover, since $AV^{-1}\in \GL$, by Proposition \ref{prop:aut1}, there is a word $v\in L(\F(\A_{t-1},M_{t-1};U_iD_{t-1}))$ such that $\phi(w_{t-1})M_{t-1}\phi(v)^{-1} =U_iD_{t-1}$. Combining the last two equations together we obtain that
\[
\phi(w_1)M_1\dots\phi(w_{t-1})M_{t-1}\phi(v)^{-1}\phi(u)^{-1}=M_t
\]
or, equivalently,
\[
\phi(w_1)M_1\dots\phi(w_{t-1})M_{t-1}\phi(uv)^{-1}=M_t.
\]
Note that
\[
uv\in L\Big(\F(\A_1,\dots,\A_{t-3},\,\A_{t-2},\,M_1,\dots,M_{t-3},\,M_{t-2}U_iD_{t-1};\,M_t)\cdot \F(\A_{t-1},M_{t-1};U_iD_{t-1})\Big)
\]
and hence $uv\in L(\F(\A_1,\dots,\A_{t-1},M_1,\dots,M_{t-1};M_t))$. Therefore, property (1) holds.

To show the second property, let's take $w\in L(\F(\A_1,\dots,\A_{t-1},M_1,\dots,M_{t-1};M_t))$. Then there is $i\in \{0,\dots,k\}$ such that
\[
w\in L\Big(\F(\A_1,\dots,\A_{t-3},\,\A_{t-2},\,M_1,\dots,M_{t-3},\,M_{t-2}U_iD_{t-1};\,M_t)\cdot \F(\A_{t-1},M_{t-1};U_iD_{t-1})\Big).
\]
Therefore, there are words $u$ and $v$ such that
\[
u\in L\Big(\F(\A_1,\dots,\A_{t-3},\,\A_{t-2},\,M_1,\dots,M_{t-3},\,M_{t-2}U_iD_{t-1};\,M_t)\Big).
\]
and $v\in L(\F(\A_{t-1},M_{t-1};U_iD_{t-1}))$. By Proposition \ref{prop:aut1}, there is $w_{t-1}\in L(\A_{t-1})$ such that
\[
\phi(w_{t-1})M_{t-1}\phi(v)^{-1} =U_iD_{t-1}.
\]
Furthermore, by the inductive hypothesis, there are $w_1\in L(\A_1),\dots,w_{t-2}\in L(\A_{t-2})$ such that
\[
\phi(w_1)M_1\dots\phi(w_{t-2})M_{t-2}U_iD_{t-1}\phi(u)^{-1} =M_t.
\]
Combining the last two equation together we obtain
\[
\phi(w_1)M_1\dots\phi(w_{t-1})M_{t-1}\phi(v)^{-1}\phi(u)^{-1}=M_t.
\]
Note that $\phi(w)^{-1}=\phi(v)^{-1}\phi(u)^{-1}$, and hence we have $\phi(w_1)M_1\dots\phi(w_{t-1})M_{t-1}\phi(w)^{-1}=M_t$.  Therefore, property (2) holds.

\end{proof}

\begin{theorem}\label{thm:diag}
Let $M_1,\dots,M_t$ be nonsingular matrices from $\Z^{2\times 2}$ and let $\s_1,\dots,\s_t$ be regular subsets of $\GL$. Then it is decidable whether there exist matrices $A_1\in \s_1,\dots,A_t\in \s_t$ such that $A_1M_1\dots A_{t-1}M_{t-1}A_t= M_t$.
\end{theorem}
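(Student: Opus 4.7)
The plan is to reduce the $t$-variable problem to an equivalent intersection-nonemptiness question for regular languages, using the automaton $\F(\A_1,\dots,\A_{t-1},M_1,\dots,M_{t-1};M_t)$ that we already built in Proposition~\ref{prop:autom} as the main tool. First I would fix finite automata $\A_1,\dots,\A_t$ with $\s_i = \phi(L(\A_i))$ for $i=1,\dots,t$, which exist by definition of regular subsets of $\GL$. The existence of the required $A_1,\dots,A_t$ with $A_1M_1\cdots A_{t-1}M_{t-1}A_t = M_t$ is visibly equivalent to the existence of words $w_1\in L(\A_1),\dots,w_{t-1}\in L(\A_{t-1})$ together with some $A_t\in \s_t$ such that
\[
\phi(w_1)M_1\cdots \phi(w_{t-1})M_{t-1}\,A_t = M_t.
\]

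Next I would use Proposition~\ref{prop:autom} to absorb the first $t-1$ unknowns. By property~(1) of that proposition, the above system has a solution if and only if there exist a word $w \in L(\F(\A_1,\dots,\A_{t-1},M_1,\dots,M_{t-1};M_t))$ and words $w_1\in L(\A_1),\dots,w_{t-1}\in L(\A_{t-1})$ with $\phi(w_1)M_1\cdots \phi(w_{t-1})M_{t-1}\phi(w)^{-1}=M_t$ and $\phi(w)^{-1}\in \s_t$; and by property~(2), any such $w$ automatically yields compatible $w_1,\dots,w_{t-1}$. Thus the whole problem collapses to: does there exist $w\in L(\F(\A_1,\dots,\A_{t-1},M_1,\dots,M_{t-1};M_t))$ with $\phi(w)^{-1}\in \s_t$?

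To turn the side condition $\phi(w)^{-1}\in \s_t$ into a regular language condition on $w$, I would apply the $\mathrm{Inv}$ construction from Section~\ref{1mat} to $\A_t$, obtaining an automaton $\mathrm{Inv}(\A_t)$ with $\phi(L(\mathrm{Inv}(\A_t)))=\{A^{-1} : A\in \s_t\}$. Then $\phi(w)^{-1}\in \s_t$ iff $\phi(w)\in \phi(L(\mathrm{Inv}(\A_t)))$, i.e., iff $w$ is equivalent (under $\sim$) to some word accepted by $\mathrm{Inv}(\A_t)$. Passing to canonical forms via the $\mathrm{Can}$ construction used in Proposition~\ref{MPReg} (and Proposition~\ref{prop:1diag}), this is further equivalent to the condition
\[
L\bigl(\mathrm{Can}(\F(\A_1,\dots,\A_{t-1},M_1,\dots,M_{t-1};M_t))\bigr) \cap L\bigl(\mathrm{Can}(\mathrm{Inv}(\A_t))\bigr) \neq \emptyset,
\]
because Corollary~\ref{cor:can} guarantees that two words represent the same matrix in $\GL$ iff they have the same canonical form.

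Finally, since regular languages are effectively closed under intersection and emptiness of regular languages is decidable, the displayed condition can be checked algorithmically, which completes the proof. The main conceptual work has already been done in Proposition~\ref{prop:autom}; the only subtle point here is the correct bookkeeping of the last variable $A_t$, which one must handle by passing to $\mathrm{Inv}(\A_t)$ rather than by extending the induction further, since the construction of $\F$ only directly absorbs the matrices that appear between two unknowns and not the trailing one.
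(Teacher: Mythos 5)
Your proposal is correct and follows essentially the same route as the paper: apply Proposition~\ref{prop:autom} to absorb $A_1,\dots,A_{t-1}$ into the automaton $\F(\A_1,\dots,\A_{t-1},M_1,\dots,M_{t-1};M_t)$, encode the remaining constraint $\phi(w)^{-1}\in\s_t$ via $\mathrm{Inv}(\A_t)$, pass to canonical forms with $\mathrm{Can}$, and decide nonemptiness of the intersection of two regular languages. Your closing remark about why the trailing unknown $A_t$ must be handled through $\mathrm{Inv}$ rather than by pushing the induction one step further is precisely the design point of the paper's construction, so your bookkeeping is sound.
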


\begin{proof}
Let $\A_1,\dots,\A_t$ be finite automata such that $\s_i=\phi(L(\A_i))$, for each $i=1,\dots,t$. Now consider an automaton $\F(\A_1,\dots,\A_{t-1},M_1,\dots,M_{t-1};M_t)$ which was constructed in the proof of Proposition \ref{prop:autom}. We will show the following equivalence: there exist matrices $A_1\in \s_1,\dots,A_t\in \s_t$ that satisfy the equation $A_1M_1\dots A_{t-1}M_{t-1}A_t= M_t$ if and only if
\[
L\big(\mathrm{Can}(\mathrm{Inv}(\A_t))\big)\ \cap\ L\big(\mathrm{Can}(\F(\A_1,\dots,\A_{t-1},M_1,\dots,M_{t-1};M_t))\big) \neq \emptyset.
\]
The statement of the theorem then follows from the decidability of the emptiness problem for regular languages. 

First, suppose there are matrices $A_1\in \s_1,\dots,A_t\in \s_t$ such that $A_1M_1\dots A_{t-1}M_{t-1}A_t= M_t$. Then there are words $w_1\in L(\A_1),\dots,w_t\in L(\A_t)$ such that
\[
\phi(w_1)M_1\dots \phi(w_{t-1})M_{t-1}\phi(w_t)=M_t.
\]
By property (1) of Proposition \ref{prop:autom}, there is a word $u\in L(\F(\A_1,\dots,\A_{t-1},M_1,\dots,M_{t-1};M_t))$ such that
\[
\phi(w_1)M_1\dots \phi(w_{t-1})M_{t-1}\phi(u)^{-1}=M_t.
\]
In particular, we have $\phi(w_t)=\phi(u)^{-1}$. Furthermore, by the construction of $\mathrm{Inv}(\A_t)$, there is a word $v\in L(\mathrm{Inv}(\A_t))$ such that $\phi(v)=\phi(w_t)^{-1}$. So we have $\phi(u)=\phi(w_t)^{-1}=\phi(v)$, that is, $u\sim v$. Let $w$ be the canonical word that is equivalent to $u$ and $v$. Then
\[
w\in L\big(\mathrm{Can}(\mathrm{Inv}(\A_t))\big)\ \cap\ L\big(\mathrm{Can}(\F(\A_1,\dots,\A_{t-1},M_1,\dots,M_{t-1};M_t))\big).
\]

On the other hand, suppose there is a word $w$ such that
\[
w\in L\big(\mathrm{Can}(\mathrm{Inv}(\A_t))\big)\ \cap\ L\big(\mathrm{Can}(\F(\A_1,\dots,\A_{t-1},M_1,\dots,M_{t-1};M_t))\big).
\]
Then there are words $u$ and $v$ such that $u\sim v\sim w$ and $u\in L(\F(\A_1,\dots,\A_{t-1},M_1,\dots,M_{t-1};M_t))$ and $v\in L(\mathrm{Inv}(\A_t))$. Hence there is $w_t\in L(\A_t)$ such that $\phi(w_t)=\phi(v)^{-1}$. Also by property (2) of Proposition \ref{prop:autom}, there are words $w_1\in L(\A_1),\dots,w_{t-1}\in L(\A_{t-1})$ such that
\[
\phi(w_1)M_1\dots \phi(w_{t-1})M_{t-1}\phi(u)^{-1}=M_t.
\]
Since $v\sim u$, we have that $\phi(u)^{-1}=\phi(v)^{-1}=\phi(w_t)$. Therefore, the above equation is equivalent to
\[
\phi(w_1)M_1\dots \phi(w_{t-1})M_{t-1}\phi(w_t)=M_t.
\]
Now if we let $A_1=\phi(w_1),\dots,A_t=\phi(w_t)$, then for each $i=1,\dots,t$ the matrix $A_i$ belongs to $\s_i$, and hence we have $A_1M_1\dots A_{t-1}M_{t-1}A_t= M_t$.

\end{proof}

\section{Appendix}

\subsection{Construction of the automaton $\mathrm{Can}(\A)$}\label{Can}
Let $\A$ be a finite automaton with alphabet $\Sigma$. We will construct a new automaton $\mathrm{Can}(\A)$ such that the language of $\mathrm{Can}(\A)$ contains only canonical words and $\mathrm{Can}(\A)\sim \A$, that is, $\phi(L(\mathrm{Can}(\A)))=\phi(L(\A))$. In order to do this, we will define a sequence of transformations called $\mathrm{Red}$, $F_N$ and $F_X$ which will have the following properties:
\begin{itemize}
\item $\mathrm{Can}(\A) = F_X\circ \mathrm{Red} \circ F_N(\A)$,
\item $L(F_N(\A)) \sub {\{X,S,R\}}^*\cup N{\{X,S,R\}}^*$, that is, $F_N(\A)$ accepts only those words that have at most one occurrence of $N$ which may appear only in the first position,
\item $L(\mathrm{Red} \circ F_N(\A))\sub {\{X,S,R\}}^*\cup N{\{X,S,R\}}^*$ and, moreover, $\mathrm{Red} \circ F_N(\A)$ accepts only those words that do not contain subwords of the form $\mathit{XX}$, $SX^\alpha S$ and $RX^{\alpha_1}RX^{\alpha_2}R$ for any $\alpha,\alpha_1,\alpha_2\in \{0,1\}$,
\item $F_X\circ \mathrm{Red} \circ F_N(\A)$ accepts only canonical words,
\item finally, we will have the equivalences $\A\sim F_N(\A)\sim \mathrm{Red} \circ F_N(\A)\sim F_X\circ \mathrm{Red} \circ F_N(\A) = \mathrm{Can}(\A)$.
\end{itemize}
We now describe each of these transformations in detail.

{\bf Transformation $F_N$.} We will make use of the following equivalences which can be easily verified: $X\sim \mathit{NXN}$, $S\sim \mathit{NXSN}$, and $R\sim \mathit{NS}R^2\mathit{SN}$.

First, for every transition $q\xrightarrow{X} q'$ which appears in $\A$, we add new states $p_1$, $p_2$ and a new path of the form $q\xrightarrow{N} p_1\xrightarrow{X} p_2\xrightarrow{N} q'$. Note that since $X\sim \mathit{NXN}$, the addition of such paths produces an equivalent automaton. Similarly, for any transition $q\xrightarrow{S} q'$ in $\A$, we add new states $p_1$, $p_2$, $p_3$ and a path $q\xrightarrow{N} p_1\xrightarrow{X} p_2\xrightarrow{S} p_3\xrightarrow{N} q'$. Finally, for any transition $q\xrightarrow{R} q'$ in $\A$, we add new states $p_1$, $p_2$, $p_3$, $p_4$, $p_5$ and a path $q\xrightarrow{N} p_1\xrightarrow{S} p_2\xrightarrow{R} p_3\xrightarrow{R} p_4\xrightarrow{S} p_5\xrightarrow{N} q'$. Again, the addition of such paths produces an equivalent automaton. Let us call this automaton $\A_1$.

Now for every pair of states $q$, $q'$ in $\A_1$, which are connected by a path labelled with $\mathit{NN}$, we add an $\epsilon$-transition $q\xrightarrow{\epsilon} q'$. We repeat this procedure iteratively until no new $\epsilon$-transitions of this type can be added. Let $\A_2$ be the resulting automaton. Note that since $\mathit{NN}$ is equivalent to the empty word, which represents the identity matrix $I$, the automaton $\A_2$ is equivalent to $\A_1$ and hence to~$\A$.

Let $F_N(\A)$ be an automaton that recognizes the intersection $L(\A_2)\cap ({\{X,S,R\}}^*\cup N{\{X,S,R\}}^*)$. Obviously, the language of $F_N(\A)$ is a subset of ${\{X,S,R\}}^*\cup N{\{X,S,R\}}^*$, so we only need to show that $F_N(\A)\sim \A$. Take any $w_1\in L(F_N(\A))$, then $w_1\in L(\A_2)$ and since $\A_2\sim \A$, there is $w_2\in L(\A)$ such that $w_1\sim w_2$. Next, we need to prove that for any $w_2\in L(\A)$, there is $w_1\in L(F_N(\A))$ such that $w_2\sim w_1$.

Let us take any $w_2\in L(\A)$. To construct the required word $w_1$, we first need to find all occurrences of letter $N$ in $w_2$. For example, suppose that $w_2=u_1Nu_2N\dots u_{n-1}Nu_n$, where each $u_i\in {\{X,S,R\}}^*$. If the number of $N$'s is odd, then in each subword $u_i$ with odd $i$ we replace every occurrence of $X$, $S$, and $R$ with $\mathit{NXN}$, $\mathit{NXSN}$, and $\mathit{NS}R^2\mathit{SN}$, respectively, and leave $u_i$'s with even $i$ unchanged. On the other hand, if the number of $N$'s is even, then we apply such substitution to each $u_i$ with even $i$ and leave $u_i$'s with odd $i$ unchanged. Let $w'$ be the resulting word. Then by construction $w'\sim w_2$ and $w'\in L(\A_1)$. Next, we repeatedly remove all occurrences of the subword $\mathit{NN}$ from $w'$. This will give us a word $w_1\sim w'\sim w_2$ such that $w_1\in L(\A_2)$ and $w_1$ contains at most one letter $N$, which may appear in the first position. Hence $w_1\in L(F_N(\A))$. This idea is illustrated by the following example. Let $w_2=\mathit{SXNRNRSNS}\in L(\A)$, so $w_2$ contains an odd number of $N$'s and hence
\[
\begin{split}
w'&=(\mathit{NXSN})(\mathit{NXN})\mathit{NRN}(\mathit{NS}R^2\mathit{SN})(\mathit{NXSN})\mathit{NS}\\
&=\mathit{NXS}(\mathit{NN})X(\mathit{NN})R(\mathit{NN})SR^2S(\mathit{NN})\mathit{XS}(\mathit{NN})S.
\end{split}
\]
In the above formula parentheses are inserted only to visually separated subwords in $w'$. After removing subwords $\mathit{NN}$ from $w'$ we obtain $w_1=\mathit{NXSXRS}R^2\mathit{SXSS}\in L(F_N(\A))$ such that $w_1\sim w_2$.
The next example illustrates the same idea for an even number of $N$'s. Let $w_2=\mathit{SXNRNRSNSN}\in L(\A)$, then
\[
\begin{split}
w'&=\mathit{SXN}(\mathit{NS}R^2\mathit{SN})\mathit{NRSN}(\mathit{NXSN})N\\
&=\mathit{SX}(\mathit{NN})SR^2S(\mathit{NN})\mathit{RS}(\mathit{NN})\mathit{XS}(\mathit{NN}).
\end{split}
\]
After removing $\mathit{NN}$ from $w'$ we obtain $w_1=\mathit{SXS}R^2\mathit{SRSXS}\in L(F_N(\A))$ such that $w_1\sim w_2$. This completes the proof that $F_N(\A)\sim \A$.

{\bf Transformation $\mathrm{Red}$.} To construct $\mathrm{Red}\circ F_N(\A)$ from $F_N(\A)$ we will make use of the following equivalences $SS\sim X$ and $RRR\sim X$. We will also use the fact that $X$ commutes with $S$, $R$, and $N$, and that $\mathit{XX}$ is equivalent to the empty word.

First, we apply the following procedure to $F_N(\A)$:
\begin{enumerate}[(1)]
\item For any pair of states $q$, $q'$ in $F_N(\A)$ that are connected by a path labelled with $\mathit{XX}$, we add an $\epsilon$-transition $q\xrightarrow{\epsilon} q'$.

\item For any pair of states $q$, $q'$ in $F_N(\A)$ that are connected by a path labelled with $SX^\alpha S$, where $\alpha\in \{0,1\}$ (recall that $X^0$ denotes the empty word), we add a new transition $q\xrightarrow{X^\beta} q'$, where $\beta=1-\alpha$.

\item For any pair of states $q$, $q'$ in $F_N(\A)$ that are connected by a path labelled with $RX^{\alpha_1}RX^{\alpha_2}R$, where $\alpha_1, \alpha_2\in \{0,1\}$, we add a new transition $q\xrightarrow{X^\gamma} q'$, where $\gamma \in \{0,1\}$ is such that $\gamma \equiv \alpha_1+\alpha_2+1 \mod 2$.
\end{enumerate}
We repeat the above steps iteratively until no new transitions can be added.

Let $\A'$ be the resulting automaton. By construction, we have $\A'\sim F_N(\A)$. Let $\LL_\mathrm{Red}$ be the regular language which consists of all words in alphabet $\Sigma$ that do not contain subwords of the form $\mathit{XX}$, $SX^\alpha S$ and $RX^{\alpha_1}RX^{\alpha_2}R$ for any $\alpha,\alpha_1,\alpha_2\in \{0,1\}$. Define $\mathrm{Red}\circ F_N(\A)$ as an automaton that accepts the language $L(\A')\cap \LL_\mathrm{Red}$. It is not hard to see that the language of $\mathrm{Red}\circ F_N(\A)$ is contained in $\LL_\mathrm{Red}\cap ({\{X,S,R\}}^*\cup N{\{X,S,R\}}^*)$.

What is left to show is that $\mathrm{Red}\circ F_N(\A)\sim F_N(\A)$. If $w_1\in L(\mathrm{Red}\circ F_N(\A))$, then $w_1\in L(\A')$, and hence $w_1\sim w_2$ for some $w_2\in L(F_N(\A))$ because $\A'\sim F_N(\A)$. On the other hand, if $w_2\in L(F_N(\A))$, then we can repeatedly remove subwords $\mathit{XX}$ from $w_2$ and replace subwords of the form $SX^\alpha S$ and $RX^{\alpha_1}RX^{\alpha_2}R$, for $\alpha,\alpha_1,\alpha_2\in \{0,1\}$, with $X^\beta$ and $X^\gamma$, respectively, where $\beta=1-\alpha$ and $\gamma \in \{0,1\}$ is such that $\gamma \equiv \alpha_1+\alpha_2+1 \mod 2$. Let $w_1$ be a resulting word that does not contain subwords $\mathit{XX}$, $SX^\alpha S$ and $RX^{\alpha_1}RX^{\alpha_2}R$ for any $\alpha,\alpha_1,\alpha_2\in \{0,1\}$. Then $w_1\sim w_2$ and $w_1\in L(\A')\cap \LL_\mathrm{Red}=L(\mathrm{Red}\circ F_N(\A))$.

{\bf Transformation $F_X$.} The words accepted by $\mathrm{Red}\circ F_N(\A)$ are almost in canonical form with the exception that the letter $X$ may appear in the middle of a word. To get rid of such $X$'s we use a similar idea as in the construction of $F_N(\A)$. Namely, we will use the following equivalences: $S\sim \mathit{XSX}$ and $R\sim \mathit{XRX}$. Note that we will not need the equivalence $N\sim \mathit{XNX}$ because the letter $N$ can appear only at the beginning of a word.

To construct $\mathrm{Can}(\A)=F_X\circ \mathrm{Red}\circ F_N(\A)$ from $\mathrm{Red}\circ F_N(\A)$, we do the following. First, for every transition $q\xrightarrow{S} q'$ which appears in $\mathrm{Red}\circ F_N(\A)$, we add new states $p_1$, $p_2$ and a new path of the form $q\xrightarrow{X} p_1\xrightarrow{S} p_2\xrightarrow{X} q'$. Similarly, for every transition $q\xrightarrow{R} q'$ which appears in $\mathrm{Red}\circ F_N(\A)$, we add new states $p_1$, $p_2$ and a new path of the form $q\xrightarrow{X} p_1\xrightarrow{R} p_2\xrightarrow{X} q'$. After that we iteratively add $\epsilon$ transitions $q\xrightarrow{\epsilon} q'$ for every pair of states $q$, $q'$ that are connected by a path with label $\mathit{XX}$. We do this until no new $\epsilon$-transitions can be added. 

Let $\A'$ be the resulting automaton, which is by construction equivalent to $\mathrm{Red}\circ F_N(\A)$. Let $\LL_\mathrm{Can}$ be the regular language which consists of all canonical words in alphabet $\Sigma$. Define $\mathrm{Can}(\A)=F_X\circ \mathrm{Red}\circ F_N(\A)$ as an automaton that accepts the language $L(\A')\cap \LL_\mathrm{Can}$. Therefore, $\mathrm{Can}(\A)$ accepts only canonical words.

The proof that $\mathrm{Can}(\A)\sim \mathrm{Red}\circ F_N(\A)$ is similar to the proof that $F_N(\A)\sim \A$ given above. If $w_1\in L(\mathrm{Can}(\A))$, then $w_1\in L(\A')$ and hence $w_1\sim w_2$ for some $w_2\in L(\mathrm{Red}\circ F_N(\A))$ because $\A'\sim \mathrm{Red}\circ F_N(\A)$. On the other hand, if $w_2\in L(\mathrm{Red}\circ F_N(\A))$, then to construct $w_1\in L(\mathrm{Can}(\A))$ such that $w_1\sim w_2$ we first find all occurrences of the letter $X$ in $w_2$. For example, let $w_2$ has the form $w_2=Nu_1Xu_2X\dots u_{n-1}Xu_n$ or the form $w_2=u_1Xu_2X\dots u_{n-1}Xu_n$, where each $u_i\in {\{S,R\}}^*$. If the number of $X$'s is odd, then in each $u_i$ with odd $i$ we replace every occurrence of $R$ and $S$ with $\mathit{XRX}$ and $\mathit{XSX}$, respectively, and leave $u_i$'s with even $i$ unchanged. If the number of $X$'s is even, then we do the same substitution in all $u_i$'s with even $i$ and leave $u_i$'s with odd $i$ unchanged. After that we remove all occurrences of $\mathit{XX}$. If $w_1$ is a resulting word, then $w_1\sim w_2$ and $w_1\in L(\A')$. Moreover, since $w_1$ is in canonical form, we also have $w_1\in L(\mathrm{Can}(\A))$. This idea is illustrated by the following example. Suppose $w_2=\mathit{NSRXSXRRX}$, then after replacing suitable occurrences of $R$ and $S$ with $\mathit{XRX}$ and $\mathit{XSX}$, respectively, we obtain the word
\[
\begin{split}
&N(\mathit{XSX})(\mathit{XRX})\mathit{XSX}(\mathit{XRX})(\mathit{XRX})X\\
= & \mathit{NXS}(\mathit{XX})R(\mathit{XX})S(\mathit{XX})R(\mathit{XX})R(\mathit{XX}).
\end{split}
\]
After removing all occurrences of $\mathit{XX}$ we obtain the word $w_1=\mathit{NXSRSRR}\sim w_2$ which is in canonical form, and hence $w_1\in L(\mathrm{Can}(\A))$. This completes the construction of $\mathrm{Can}(\A)$.

\bibliographystyle{abbrv}
\bibliography{refs}

\end{document}